\newcommand*{\addFileDependency}[1]{
  \typeout{(#1)}
  \@addtofilelist{#1}
  \IfFileExists{#1}{}{\typeout{No file #1.}}
}
\newcommand*{\myexternaldocument}[1]{
    \externaldocument{#1}
    \addFileDependency{#1.tex}
    \addFileDependency{#1.aux}
}
\DeclarePairedDelimiter{\ceil}{\lceil}{\rceil}
\newcommand{\defeq}{\vcentcolon=}
\newcommand{\indep}{\;\rotatebox[origin=c]{90}{$\models$}\;}
\renewcommand{\P}[1]{\mathbb{P}{\left[#1\right]}}
\newcommand{\E}[1]{\mathbb{E}{\left[#1\right]}}
\newcommand{\I}[1]{\mathbbm{1}\left[#1\right]}
\newtheorem{theorem}{Theorem}
\newtheorem{proposition}{Proposition}
\newtheorem{lemma}{Lemma}
\newtheorem{assumption}{Assumption}
\title{Conformal Prediction using Conditional Histograms}
\author{%
  Matteo Sesia\\
  Department of Data Sciences and Operations \\
  University of Southern California, USA \\
  \texttt{sesia@marshall.usc.edu}
  \And
  Yaniv Romano \\
  Departments of Electrical and of Computer Engineering\\
  and of Computer Science \\
  Technion, Israel \\
  \texttt{yromano@technion.ac.il}
}
\begin{document}

\maketitle

\begin{abstract}
This paper develops a conformal method to compute prediction intervals for non-parametric regression that can automatically adapt to skewed data.
Leveraging black-box machine learning algorithms to estimate the conditional distribution of the outcome using histograms, it translates their output into the shortest prediction intervals with approximate conditional coverage.
The resulting prediction intervals provably have marginal coverage in finite samples, while asymptotically achieving conditional coverage and optimal length if the black-box model is consistent. Numerical experiments with simulated and real data demonstrate improved performance compared to state-of-the-art alternatives, including conformalized quantile regression and other distributional conformal prediction approaches.
\end{abstract}

\section{Introduction}

\subsection{Problem statement and motivation} \label{sec:intro-statement}

We consider the problem of predicting {\em with confidence} a response variable $Y \in \mathbb{R}$ given $p$ features $X \in \mathbb{R}^p$  for a test point $n+1$, utilizing $n$ pairs of observations $\{(X^{(i)}, Y^{(i)})\}_{i=1}^{n}$ drawn {\em exchangeably} (e.g., i.i.d.) from some unknown distribution, and leveraging any machine-learning algorithm.
Precisely, $\forall \alpha \in (0,1)$, we seek a prediction {\em interval} $\hat{C}_{n,\alpha}(X_{n+1}) \subset \mathbb{R}$ for $Y_{n+1}$ satisfying the following three criteria. First, $\hat{C}_{n,\alpha}$ should have finite-sample marginal coverage at level $1-\alpha$,
\begin{align} \label{eq:marginal-coverage}
    \mathbb{P}\left[Y_{n+1} \in \hat{C}_{n,\alpha}(X_{n+1}) \right] \geq 1-\alpha.
\end{align}
Second, $\hat{C}_{n,\alpha}$ should {\em approximately} have conditional coverage at level $1-\alpha$,
\begin{align} \label{eq:conditional-coverage}
    \mathbb{P}\left[ Y_{n+1} \in \hat{C}_{n,\alpha}(x) \mid X_{n+1} = x \right] \geq 1-\alpha,  \qquad \forall x \in \mathbb{R}^p,
\end{align}
meaning it should approximate this objective in practice, and ideally achieve it asymptotically under suitable conditions in the limit of large sample sizes. Third, $\hat{C}_{n,\alpha}$ should be as narrow as possible.

We tackle this challenge with conformal inference~\cite{vovk2005algorithmic,lei2018distribution}, which allows one to convert the output of any black-box machine learning algorithm into prediction intervals with provable marginal coverage~\eqref{eq:marginal-coverage}.
The key idea of this framework is to compute a {\em conformity score} for each observation, measuring the discrepancy, according to some metric, between the true value of $Y$ and that predicted by the black-box model. The model fitted on the training data is then applied to hold-out calibration samples, producing a collection of conformity scores. As all data points are exchangeable, the empirical distribution of the calibration scores can be leveraged to make predictive inferences about the conformity score of a new test point. Finally, inverting the function defining the conformity scores yields a prediction set for the test $Y$.
This framework can accommodate almost any choice of conformity scores, and in fact many different implementations have already been proposed to address our problem.
However, it remains unclear how to implement a concrete method from this broad family that can lead to the most informative possible prediction intervals.
Our contribution here is to develop a practical solution, following the three criteria defined above, that performs better compared to existing alternatives and is asymptotically optimal under certain assumptions.

It is worth emphasizing that constructing a short prediction interval with guaranteed coverage is a reasonable approach to quantify and communicate predictive uncertainty in regression problems, although it is of course not the only one.
To name an alternative, one could compute a non-convex prediction set with analogous coverage \cite{izbicki2020cd}, which might be more appropriate in some situations, but is also more easily confusing. 
For example, it could be informative for a physician to know that the future
blood pressure of a patient with certain characteristics is predicted to be within the range
[120,129] mmHg. However, it would not be more helpful to report instead the following non-convex region: $[120, 120.012] \cup [120.015, 120.05] \cup
[121, 122.7] \cup [123.1, 127.2] \cup [127.8, 129]$ mmHg.
Indeed, in the second case it would not be clear (a) whether the multi-modal nature of that prediction is significant or a spurious consequence of overfitting, and (b) how the
physician would act upon that prediction any differently than if it had been [120,129] mmHg. 
Therefore, we focus on prediction intervals in this paper because they are generally easier to interpret than arbitrary regions, and they are also less likely to convey a false sense of confidence.

\subsection{Preview of conformal histogram regression}

Imagine an {\em oracle} with access to $P_{Y \mid X}$, the distribution of $Y$ conditional on $X$, which leverages such information to construct optimal prediction intervals as follows. For simplicity, suppose $P_{Y \mid X}$ has a continuous density $f(y \mid x)$ with respect to the Lebesgue measure, although this could be relaxed with more involved notation. Then, the oracle interval for $Y_{n+1} \mid X_{n+1}=x$ would be:
\begin{align} \label{eq:oracle-interval}
  C^{\mathrm{oracle}}_{\alpha}(x) & = \left[ l_{1-\alpha}^{\mathrm{oracle}}(x), u_{1-\alpha}^{\mathrm{oracle}}(x) \right],
\end{align}
where, for any $\tau \in (0,1]$, $l_{\tau}^{\mathrm{oracle}}(x)$ and $u_{\tau}^{\mathrm{oracle}}(x)$ are defined as:
\begin{align} \label{eq:oracle-interval-int}
  [l_{\tau}^{\mathrm{oracle}}(x), u_{\tau}^{\mathrm{oracle}}(x)] & \defeq \mathop{\mathrm{arg\,min}}_{(l,u) \in \mathbb{R}^2 \, : \, l \leq u}  \left\{ u-l : \int_{l}^{u} f(y \mid x) dy \geq \tau \right\}.
\end{align}
This is the shortest interval with conditional coverage~\eqref{eq:conditional-coverage}.
If the solution to~\eqref{eq:oracle-interval-int} is not unique (e.g., if $f(\cdot \mid x)$ is piece-wise constant), the oracle picks any solution at random.
Of course, this is not a practical method because $f$ is unknown.
Therefore, we will approximate~\eqref{eq:oracle-interval-int} by fitting a black-box model on the training data, and then use conformal prediction to construct an interval accounting for any possible estimation errors.
Specifically, we replace $f$ in~\eqref{eq:oracle-interval-int} with a histogram approximation, hence why we call our method {\em conformal histogram regression}, or CHR. The output interval is then
\begin{align} \label{eq:conformal-interval}
  \hat{C}_{n,\alpha}(x) & = \left[ \hat{l}_{\hat{\tau}}(x), \hat{u}_{\hat{\tau}}(x) \right],
\end{align}
where $\hat{l}_{\hat{\tau}}(x)$ and $\hat{u}_{\hat{\tau}}(x)$ approximate the analogous oracle quantities in~\eqref{eq:oracle-interval-int}. The value of $\hat{\tau}$ in~\eqref{eq:conformal-interval} will be determined by suitable conformity scores evaluated on the hold-out data, and it may be larger than $1-\alpha$ if the model for $f$ is not very accurate. However, if the fitted histogram is close to the true $P_{Y \mid X}$,
the interval in~\eqref{eq:conformal-interval} will resemble that of the oracle~\eqref{eq:oracle-interval}.

Figure~\ref{fig:example-1} previews an application to toy data, comparing CHR to conformalized quantile regression (CQR)~\cite{romano2019conformalized}; see Section~\ref{sec:exp-synthetic} for more details. CHR finds the shortest interval such the corresponding area under the histogram is above $\tau$, for any $\tau \in (0,1]$, and then calibrates $\tau$ to guarantee marginal coverage above $1-\alpha$; this extracts more information from the model compared to CQR. 
For example, CHR adapts automatically to the skewness of $Y \mid X$, returning intervals delimited by the 0\%--90\% quantiles in this example, which are shorter than the symmetric ones (5\%--95\%) sought by CQR.

\begin{figure}[!htb]
  \centering
  \begin{minipage}[t]{0.03\textwidth}
    (a)
  \end{minipage}
  \begin{minipage}[]{0.45\textwidth}
    \includegraphics[width=\textwidth]{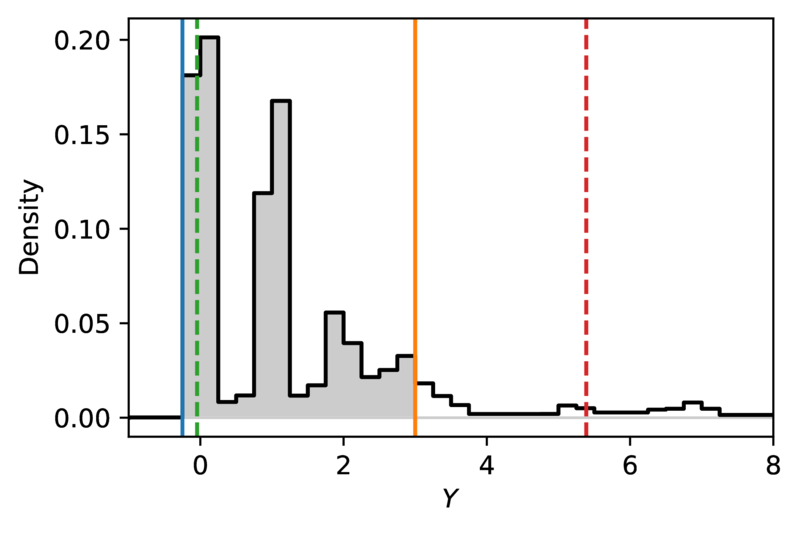}
  \end{minipage}
  \begin{minipage}[t]{0.03\textwidth}
    (b)
  \end{minipage}
  \begin{minipage}[]{0.45\textwidth}
    \includegraphics[width=\textwidth]{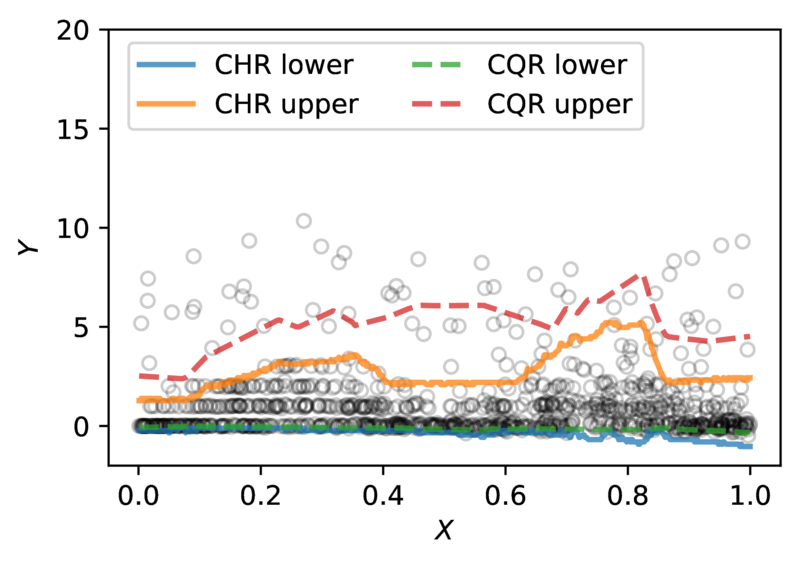}
  \end{minipage}
\caption{CHR prediction intervals in an example with one variable, compared to those obtained with CQR~\cite{romano2019conformalized}. Both methods guarantee 90\% marginal coverage and are based on the same deep quantile model.
(a)~Histogram estimate of $P_{Y \mid X}$ for a point with $X \approx 0.2$. The CHR interval corresponds to the shaded part of the histogram, whose area is approximately 0.9, as marked by the solid vertical lines. The dashed lines denote the CQR interval. 
(b)~Prediction bands for the two methods, as a function of $X$. CHR: empirical marginal coverage 0.9, estimated conditional coverage 0.9, and average length 3.2. The corresponding quantities for CQR are: 0.9, 0.9, and 5.2, respectively.} 
  \label{fig:example-1}
\end{figure}

\subsection{Related work}

This work is inspired by the conformity scores introduced by~\cite{romano2020classification} for multi-class classification, the underlying idea of which can be repurposed here. Nonetheless, the extension to our problem involves several innovations. This paper connects~\cite{romano2020classification} to other conformal methods for continuous responses~\cite{vovk2005algorithmic,lei2014distribution,lei2018distribution}, which sought objectives similar to ours by leveraging quantile regression~\cite{romano2019conformalized,kivaranovic2020adaptive,sesia2020comparison,gupta2019nested,yang2021finite} or non-parametric density estimation \cite{izbicki2019flexible,chernozhukov2019distributional}, sometimes considering multi-modal prediction sets instead of intervals~\cite{izbicki2020cd}. Our approach also exploits black-box models for distributional estimation; however, we introduce more efficient conformity scores.

We seek the shortest intervals with marginal coverage while approximating as well as possible conditional coverage, although the latter is impossible to guarantee in finite samples~\cite{vovk2012conditional,barber2019limits}.
The performances of prior approaches have been measured in terms of these criteria, yet others have not sought them as directly. 
Indeed, if the black-box model is consistent for $P_{Y \mid X}$, our method becomes asymptotically equivalent to the oracle~\eqref{eq:oracle-interval}--\eqref{eq:oracle-interval-int}, under some technical assumptions. This property does not hold for other existing methods because they tend to produce symmetric intervals, with fixed lower and upper miscoverage rates (the probabilities of the outcome being either below or above the output interval, respectively), which may be sub-optimal if the data have unknown skewness.


\section{Methods} \label{sec:methods}

The proposed method consists of four main components: the estimation and binning of a conditional model for the outcome, the construction of a nested sequence of approximate oracle intervals based on the above, the computation of suitable conformity scores, and their conformal calibration.

\subsection{Estimating conditional histograms}

We partition the domain of $Y$ into $m$ bins $[b_{j-1}, b_{j})$, for some sequence $b_0 < \ldots < b_m$. With little loss of generality, assume $Y$ is bounded: $-C = b_0 < Y < b_m = C$, for some $C>0$. Then, we solve a discrete version of the problem stated in the introduction: we seek the smallest possible contiguous subset of bins with $1-\alpha$ predictive coverage. If $m$ is large and the bins are narrow, this problem is not very different from the original one, although it is more amenable to solution.

For simplicity, we present our method from a split-conformal perspective~\cite{vovk2005algorithmic,lei2018distribution}; extensions to other hold-out approaches~\cite{vovk2005algorithmic,barber2019predictive,kim2020predictive} will be intuitive.
Let $\mathcal{D}^{\mathrm{train}}, \mathcal{D}^{\mathrm{cal}} \subset \{1,\ldots,n\}$ denote any partition of the data into training and calibration subsets, respectively.
$\mathcal{D}^{\mathrm{train}}$ is used to train a black-box model for the conditional probabilities that $Y$ is within any of the above bins: $\forall j \in \{1,\ldots,m\}$,
\begin{align} \label{eq:def-pi}
  \pi_j(x) \defeq \P{ Y \in [b_{j-1}, b_j)  \mid X = x}.
\end{align}
There exist many tools to approximate $P_{Y \mid X}$ and obtain estimates $\hat{\pi}_j(x)$ of $\pi_j(x)$, including quantile regression \cite{taylor2000quantile,meinshausen2006quantile,moon2021learning}, Bayesian additive regression trees~\cite{chipman2010bart}, or any other non-parametric conditional density estimator \cite{magdon1999neural,izbicki2016nonparametric,dalmasso2020conditional}. Our method can directly be applied with any of these models, but we found multiple quantile regression to work particularly well~\cite{romano2019conformalized,kivaranovic2020adaptive,sesia2020comparison,gupta2019nested}, and therefore we will focus on it in this paper. Referring to Supplementary Section~\ref{sec:supp-histograms} for implementation details and information about the computational cost of the learning algorithm (which is comparable to that required by CQR~\cite{romano2019conformalized}), we thus take these black-box estimates $\hat{\pi}_j(x)$ as fixed henceforth.

Note that estimating conditional distributions is more challenging if the number of variables is larger. However, this is a fundamental difficulty of high-dimensional regression, not a particular limitation of the proposed CHR.
Although our method utilizes conditional histograms learnt from the data, its performance is not directly measured in terms of how closely these resemble the true $P_{Y \mid X}$. Instead, as we shall see, CHR only needs to detect the possible skewness of $Y \mid X$ and estimate reasonably well some lower and upper quantiles of this conditional distribution. Therefore, its estimation task is not much more difficult than that of CQR~\cite{romano2019conformalized}, as skewness is relatively easy to detect.

\subsection{Constructing a nested sequence of approximate oracle intervals}

For any partition $\mathcal{B} = (b_0, \ldots, b_m)$ of the domain of $Y$, let $\pi = (\pi_1, \ldots, \pi_m)$ be a unit-sum sequence, depending on $x \in \mathbb{R}^p$; this may be seen as a histogram approximation of $P_{Y \mid X}$~\eqref{eq:def-pi}.
For simplicity, assume all histogram bins have equal width, although this is unnecessary.
Then, define the following bi-valued function $\mathcal{S}$ taking as input $x \in \mathbb{R}^p$, $\pi$, $\tau \in (0,1]$, and two intervals $S^-, S^+ \subseteq \{1,\ldots,m\}$:
\begin{align} \label{eq:def-opt-problem}
  \mathcal{S}(x, \pi, S^-, S^+, \tau) \defeq \mathop{\mathrm{arg\,min}}_{(l,u) \in \{1,\ldots,m\}^2 \,:\, l \leq u} \left\{ |u-l| : \sum_{j=l}^{u} \pi_j(x) \geq \tau, \,  S^- \subseteq [l,u] \subseteq S^+  \right\}.
\end{align}
Above, it is implicitly understood we choose the value of $(l,u)$ minimizing $\sum_{j=l}^{u} \pi_j(x)$ among the feasible ones with minimal $|u-l|$, if the optimal solution would not otherwise be unique. 
Therefore, we can assume without loss of generality the solution to~\eqref{eq:def-opt-problem} is unique; if that is not the case, we can break the ties at random by adding a little noise to $\pi$.
The problem in~\eqref{eq:def-opt-problem} can be solved at computational cost linear in the number of bins, and it is equivalent to the standard programming challenge of finding the smallest positive subarray whose sum is above a given threshold.
Note that we will sometimes refer to intervals on the grid determined by $\mathcal{B}$ as either contiguous subsets of $\{1,\ldots,m\}$ (e.g., $S^-$) or as pairs of lower and upper endpoints (e.g., $[l,u]$).

If $S^- = \emptyset$ and $S^+ = \{1,\ldots,m\}$, the expression in~\eqref{eq:def-opt-problem} computes the shortest possible interval with total mass above $\tau$ according to $\pi(x)$.
Further, if $\pi_j$ is the mass in the $j$-th bin according to the true $P_{Y \mid X}$, then $\mathcal{S}(x, \pi, \emptyset, \{1,\ldots,m\}, 1-\alpha)$ is the discretized version of the oracle interval~\eqref{eq:oracle-interval}--\eqref{eq:oracle-interval-int}.
In general, the optimization in~\eqref{eq:def-opt-problem} involves the additional {\em nesting} constraint that the output $\mathcal{S}$ must satisfy $S^- \subseteq \mathcal{S} \subseteq S^+$, which will be needed to guarantee our method has valid marginal coverage~\cite{gupta2019nested}.
Intuitively, it is helpful to work with a nested sequence because this ensures the prediction intervals are monotone increasing in $\tau$, essentially reducing the calibration problem to that of selecting the appropriate value of $\tau$ that yields the desired marginal coverage.
Note that the inequality in~\eqref{eq:def-opt-problem} involving $\tau$ may not be binding at the optimal solution due to the discrete nature of the optimization problem. 
However, the above oracle can be easily modified by introducing some suitable randomization in order to obtain valid prediction intervals that are even tighter on average, as explained in Supplementary Section~\ref{sec:supp-random}.

As $\hat{\pi}$ may be an inaccurate estimate of $P_{Y \mid X}$, we cannot simply plug it into the oracle in~\eqref{eq:def-opt-problem} and expect valid coverage. However, for any approximate conditional histogram $\hat{\pi}$, we can define a {\em nested} sequence~\cite{gupta2019nested} of (randomized) sub-intervals of $\mathcal{B}$, for different values of $\tau$ ranging from 0 to 1. Then, we calibrate $\tau$ to obtain the desired $1-\alpha$ marginal coverage.
Precisely, consider an increasing scalar sequence $\tau_t = t/T$, for $t \in \{0,\ldots,T\}$ with some $T \in \mathbb{N}$, and define a corresponding growing sequence of subsets $S_{t} \subseteq \{1,\ldots,m\}$ as follows. First, fix any {\em starting point} $\bar{t} \in \{0,\ldots,T\}$ and define $S_{\bar{t}}$ by applying \eqref{eq:def-opt-problem} without the nesting constraints (with $S^- = \emptyset$ and $S^+ = \{1,\ldots,m\}$):
\begin{align} \label{eq:def-s-start}
  & S_{\bar{t}}  \defeq \mathcal{S}(x, \pi, \emptyset, \{1,\ldots,m\}, \tau_{\bar{t}}),
\end{align}
Note the explicit dependence on $x$ and $\pi$ of the left-hand-side above is omitted for simplicity, although it is important to keep in mind that $S_{\bar{t}}$ does of course depend on these quantities. Figure~\ref{fig:intervals-schematic} (second row) visualizes the construction of $S_{\bar{t}}$ in a toy example with $\tau_{\bar{t}} = 0.9$.

\begin{figure}[!htb]
  \centering
  \includegraphics[width=0.6\textwidth]{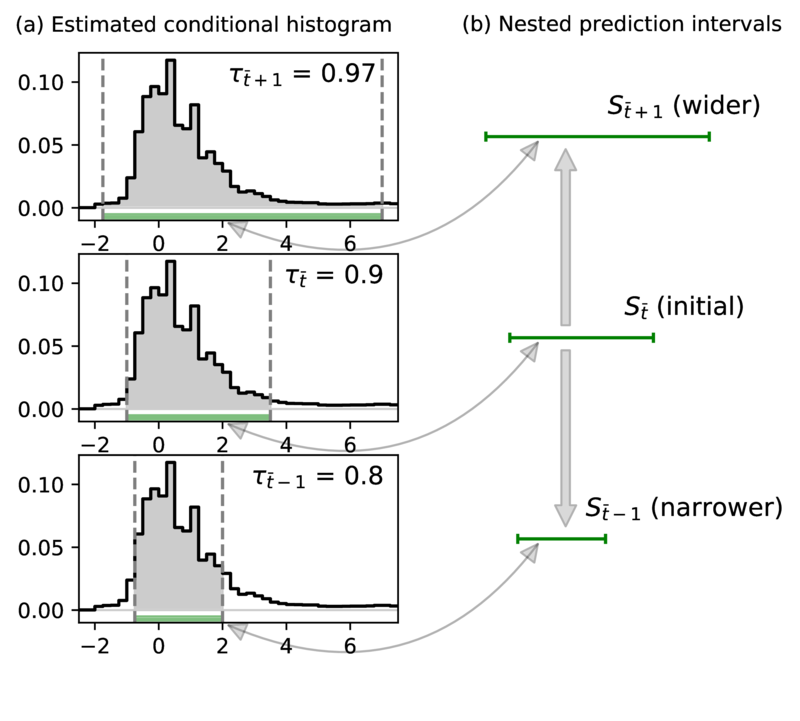}
\caption{Schematics for the construction of a nested sequence of approximate oracle prediction intervals~\eqref{eq:def-s-start}--\eqref{eq:def-s-dec}. (a) Conditional histogram approximation of the distribution of $Y \mid X$, based on a black-box model. The shaded areas delimited by the dashed vertical lines denote the shortest intervals with the desired mass ($\tau$) under the histograms, subject to the nesting constraints. (b) Sequence of prediction intervals. The initial interval $S_{\bar{t}}$ is not subject to any nesting constraints. The wider (above), or narrower (below), intervals must contain $S_{\bar{t}}$ (above), or be contained in it (below).}
  \label{fig:intervals-schematic}
\end{figure}

Having computed the initial interval $S_{t}$ for $t = \bar{t}$, we recursively extend the definition to the wider intervals indexed by $t = \bar{t} + 1, \ldots, T$ as follows:
\begin{align} \label{eq:def-s-inc}
  S_{t} & \defeq \mathcal{S}(x, \pi, S_{t-1}, \{1,\ldots,m\}, \tau_{t}).
\end{align}
See the top row of Figure~\ref{fig:intervals-schematic} for a schematic of this step.
Similarly, the narrower intervals $S_{t}$ indexed by $t = \bar{t}-1, \bar{t}-2, \ldots 0$ are defined recursively as:
\begin{align} \label{eq:def-s-dec}
  & S_{t} \defeq \mathcal{S}(x, \pi, \emptyset, S_{t+1}, \tau_{t}).
\end{align}
See the bottom row of Figure~\ref{fig:intervals-schematic} for a schematic of this step.
As a result of this construction, the sequence of intervals $\{ S_{t} \}_{t=0}^{T}$ is nested regardless of the starting point $\bar{t}$ in~\eqref{eq:def-s-start}, as previewed in Figure~\ref{fig:intervals-schematic}.
However, different choices of $\bar{t}$ may lead to different sequences, any of which allows us to obtain provable marginal coverage, as discussed next. As our goal is to approximate the oracle in~\eqref{eq:oracle-interval}--\eqref{eq:oracle-interval-int} accurately, the most intuitive choice is to pick $\bar{t}$ such that $\tau_{\bar{t}} \approx 1-\alpha$. A more involved randomized version of this construction, inspired by the more powerful randomized oracle, is discussed in Supplementary Section~\ref{sec:supp-random}. Note that the randomized version of the nested prediction intervals will be the one applied throughout this paper and, with a slight overload of notation, we will simply refer to it as $\{S_t\}_{t=1}^{T}$.
Note also that we will highlight the dependence of this sequence on $x$ and $\pi$ by writing it as $S_t(x,\pi)$.
Further, as we work henceforth with the randomized versions of these prediction intervals (described in Supplementary Section~\ref{sec:supp-random}), we will refer to them as $S_t(x,\varepsilon,\pi)$, where $\varepsilon$ is a uniform random variable in $[0,1]$, independent of everything else.

\subsection{Computing conformity scores and calibrating prediction intervals}

Given any sequence of nested sets $S_{t}(x, \varepsilon, \pi)$, we define the following conformity score function $E$:
\begin{align} \label{eq:conformity-function}
  E(x,y,\varepsilon,\pi) \defeq \min \left\{ t \in \{0,\ldots,T\}: y \in S_t(x, \varepsilon, \pi) \right\}.
\end{align}
In words, this computes the smallest index $t$ such that $S_t(x, \varepsilon, \pi)$ contains $y$, as in~\cite{romano2020classification,gupta2019nested}.
Equivalently, one can think of these scores as indicating the smallest value of the nominal coverage $\tau$ in~\eqref{eq:def-opt-problem} necessary to ensure the observed $Y$ is contained in the prediction interval.
Our method evaluates~\eqref{eq:conformity-function} on all calibration samples $(X_i,Y_i)$ using the $\hat{\pi}$ learnt on the training data; for each $i \in \mathcal{D}^{\mathrm{cal}}$, we generate $\varepsilon_i \sim \text{Unif}(0,1)$ and store
\begin{align*}
  E_i = E(X_i, Y_i, \varepsilon_i, \hat{\pi}).
\end{align*}
Then, we compute prediction intervals for $Y_{n+1}$ by looking at the nested sequence in~\eqref{eq:def-s-start}--\eqref{eq:def-s-dec} corresponding to the new $X_{n+1}$ and selecting the interval indexed by the $1-\alpha$ quantile (roughly) of $\{E_i\}_{i \in \mathcal{D}^{\mathrm{cal}}}$. 
The procedure is outlined in Algorithm~\ref{alg:sc}. Note that the only computationally expensive component of this method is the estimation of the conditional histograms (see Supplementary Section~\ref{sec:supp-histograms} for details); the construction of the nested prediction intervals and the evaluation of the conformity scores have negligible cost because the optimization problem in~\eqref{eq:def-opt-problem} is an easy one.

It may be helpful to point out that, if $\pi$ provides an accurate representation of the true conditional distribution of $Y \mid X$, then the above conformity scores are uniformly distributed~\cite{romano2020classification}. In that ideal case, no calibration is needed and indeed our method simply reduces to applying~\eqref{eq:def-opt-problem} with $\tau =0.9$ to construct prediction intervals with $90\%$ coverage. In practice, however, $\pi$ can only be a possibly inaccurate estimate of $P_{Y \mid X}$ (hence why we will refer to it as $\hat{\pi}$ from now on), which means that the distribution of the conformity scores may not be uniform and the conformal calibration is necessary to obtain valid coverage.

The next result states that the output of our method has valid marginal coverage, regardless of the accuracy of $\hat{\pi}$. The proof relies on the sequence $S_t$ being nested; from there, coverage follows from the results of~\cite{romano2020classification,gupta2019nested}; see Supplementary Section~\ref{sec:supp-histograms}.

\begin{algorithm}[!htb]
  \SetAlgoLined
    \textbf{Input:} data $\left\{(X_i, Y_i)\right\}_{i=1}^{n}$, $X_{n+1}$, partition $\mathcal{B}$ of the domain of $Y$ into $m$ equal-sized bins, level $\alpha \in (0,1)$, resolution $T$ for the conformity scores, starting index $\bar{t}$ for recursive definition of conformity scores, black-box algorithm for estimating conditional distributions. \\
  Randomly split the training data into two subsets, $\mathcal{D}^{\mathrm{train}}, \mathcal{D}^{\mathrm{cal}}$. \\
    Sample $\varepsilon_i \sim \text{Uniform}(0,1)$ for all $i \in \{1,\ldots,n+1\}$, independently of everything else. \\
  Using the data in $\mathcal{D}^{\mathrm{train}}$, train any estimate $\hat{\pi}$ of the mass of $Y \mid X$ for each bin in $\mathcal{B}$~\eqref{eq:def-pi}; see Supplementary Section~\ref{sec:supp-histograms} for a concrete approach based on quantile regression. \\
   Compute $E_i  = E(X_i, Y_i, \varepsilon_i, \hat{\pi})$ for each $i \in \mathcal{D}^{\mathrm{cal}}$, with the function $E$ defined in~\eqref{eq:conformity-function}.\\
  Compute $\hat{t} = \hat{Q}_{1-\alpha}(\{E_i \}_{i \in \mathcal{D}^{\mathrm{cal}}})$ as the $\ceil{(1-\alpha)(1+|\mathcal{D}^{\mathrm{cal}}|)}$th smallest value in $\{E_i \}_{i \in \mathcal{D}^{\mathrm{cal}}}$. \\
  Select the $\hat{t}$-th element from $\{S_t(X_{n+1}, \varepsilon_{n+1}, \hat{\pi})\}_{t=0}^{T}$, defined in~\eqref{eq:def-s-start}--\eqref{eq:def-s-dec}:
  \begin{align*}
    \hat{C}^{\mathrm{sc}}_{n,\alpha}(X_{n+1}) = S_{\hat{t}} (X_{n+1}, \varepsilon_{n+1}, \hat{\pi}).
  \end{align*}\\
  \textbf{Output:} A prediction interval $\hat{C}^{\text{sc}}_{n,\alpha}(X_{n+1})$ for $Y_{n+1}$.
 \caption{CHR with split-conformal calibration}
 \label{alg:sc}
\end{algorithm}

\begin{theorem}[Marginal coverage] \label{thm:sc}
If $(X_i,Y_i)$, for $i \in \{1,\dots,n+1\}$, are exchangeable, then the output of~Algorithm~\ref{alg:sc} satisfies:
\begin{align}
  \mathbb{P}\left[Y_{n+1} \in \hat{C}^{\mathrm{sc}}_{n,\alpha}(X_{n+1}) \right] \geq 1-\alpha.
\end{align}
\end{theorem}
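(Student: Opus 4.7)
The plan is to reduce the coverage statement to a standard split-conformal quantile bound on exchangeable conformity scores. The two ingredients that make this reduction go through are (i) the nestedness of $\{S_t\}_{t=0}^T$ established in Proposition~\ref{prop:nested-sequence}, which lets me rewrite the event $\{Y_{n+1}\in \hat C^{\mathrm{sc}}_{n,\alpha}(X_{n+1})\}$ purely in terms of the conformity score $E_{n+1}$, and (ii) the exchangeability of $\{E_i\}_{i\in\mathcal{D}^{\mathrm{cal}}\cup\{n+1\}}$, which follows from exchangeability of the data points together with the independence of the auxiliary $\varepsilon_i$'s.

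First I would observe that by the definition~\eqref{eq:conformity-function} of $E$ together with Proposition~\ref{prop:nested-sequence}, for any $x,y,\varepsilon,\pi$ and any $t\in\{0,\dots,T\}$, the inclusion $y\in S_t(x,\varepsilon,\pi)$ holds if and only if $E(x,y,\varepsilon,\pi)\leq t$; this is because nestedness implies that the set of $t$ with $y\in S_t$ is an upward-closed interval in $\{0,\dots,T\}$. Applying this at $(x,y,\varepsilon,\pi)=(X_{n+1},Y_{n+1},\varepsilon_{n+1},\hat\pi)$ gives the equivalence
\begin{equation*}
  \{Y_{n+1}\in \hat C^{\mathrm{sc}}_{n,\alpha}(X_{n+1})\} \;=\; \{E_{n+1}\leq \hat t\},
\end{equation*}
where $E_{n+1}\defeq E(X_{n+1},Y_{n+1},\varepsilon_{n+1},\hat\pi)$ and $\hat t = \hat Q_{1-\alpha}(\{E_i\}_{i\in\mathcal{D}^{\mathrm{cal}}})$ is the $\lceil(1-\alpha)(1+|\mathcal{D}^{\mathrm{cal}}|)\rceil$-th smallest calibration score.

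Next I would argue exchangeability. Conditional on $\mathcal{D}^{\mathrm{train}}$, the estimator $\hat\pi$ is fixed, and the tuples $\{(X_i,Y_i,\varepsilon_i)\}_{i\in\mathcal{D}^{\mathrm{cal}}\cup\{n+1\}}$ are exchangeable, since $(X_i,Y_i)$ are exchangeable by assumption and the $\varepsilon_i$ are i.i.d.\ $\mathrm{Unif}(0,1)$, drawn independently of the data and of each other. Because $E_i$ is a fixed deterministic function of $(X_i,Y_i,\varepsilon_i)$ given $\hat\pi$, the scores $\{E_i\}_{i\in\mathcal{D}^{\mathrm{cal}}\cup\{n+1\}}$ are likewise exchangeable conditionally on $\mathcal{D}^{\mathrm{train}}$.

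Finally I would invoke the standard split-conformal quantile lemma (as used, for example, in \cite{romano2019conformalized,romano2020classification}): for exchangeable scalars $\{E_i\}_{i\in\mathcal{D}^{\mathrm{cal}}\cup\{n+1\}}$, one has
\begin{equation*}
  \P{E_{n+1}\leq \hat Q_{1-\alpha}(\{E_i\}_{i\in\mathcal{D}^{\mathrm{cal}}})} \geq 1-\alpha.
\end{equation*}
Combining this with the equivalence above, and then marginalizing over $\mathcal{D}^{\mathrm{train}}$, yields the claimed marginal coverage. The main subtlety, and the step I would scrutinize most carefully, is the justification of the equivalence between the coverage event and $\{E_{n+1}\leq \hat t\}$: this relies crucially on Proposition~\ref{prop:nested-sequence}, and in particular on the fact that the randomized construction in~\eqref{eq:def-s-inc}--\eqref{eq:def-s-dec} never breaks the nesting property even though the function $R$ in~\eqref{eq:def-randomized} uses the same $\varepsilon$ at every level $t$. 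Once nestedness is in hand, the remainder is a direct application of the familiar split-conformal argument.
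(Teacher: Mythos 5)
Your proposal is correct and follows essentially the same route as the paper's own proof: use nestedness (Proposition~\ref{prop:nested-sequence}) to rewrite the coverage event as $\{E_{n+1}\leq\hat t\}$, then invoke the standard split-conformal quantile argument for exchangeable scores. Your write-up is somewhat more explicit than the paper's (which simply cites \cite{romano2019conformalized} at the final step), particularly in spelling out the conditional-on-$\mathcal{D}^{\mathrm{train}}$ exchangeability of the scores and the role of the i.i.d.\ auxiliary $\varepsilon_i$'s, but the underlying argument is identical.
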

Note that Theorem~\ref{thm:sc} provides only a lower bound; a nearly matching upper bound on the marginal coverage can be generally established for split-conformal inference if the conformity scores are almost-surely distinct~\cite{vovk2005algorithmic,lei2018distribution,romano2019conformalized}. Although the CHR scores~\eqref{eq:conformity-function} are discrete, our experiments will show the coverage is tight as long as the resolution $T$ is not too small.

\section{Asymptotic analysis} \label{sec:asympotic}

We prove here that the prediction intervals computed by CHR (Algorithm~\ref{alg:sc}) are asymptotically equivalent, as $n \to \infty$, to those of the oracle from~\eqref{eq:oracle-interval}--\eqref{eq:oracle-interval-int}, if the model $\hat{\pi}$ is consistent for $P_{Y \mid X}$ and a few other technical conditions are met. In particular, we analyze a slightly modified version of Algorithm~\ref{alg:sc} in which there is no randomization; this is theoretically more amenable and equivalent in spirit, although it may yield wider intervals in finite samples.
Our theory relies on the additional Assumptions~\ref{assumption:iid}--\ref{assumption:smoothing}, explained below and stated formally in Supplementary Section~\ref{sec:supp-proofs}.
\begin{enumerate}
\item The samples are i.i.d., which is stronger than exchangeability; this is the key to our concentration results.
\item The black-box model estimates $P_{Y \mid X}$ consistently, in a sense analogous to that in~\cite{lei2018distribution,sesia2020comparison}. This assumption is crucial and may be practically difficult to validate in practice, but it can be justified by existing consistency results available for some models under suitable conditions, such as random forests~\cite{meinshausen2006quantile}. 
Further, the resolution $m$ of the partition of the $Y$ domain should grow with $n$ at a certain rate, and the resolution $T$ of the scores $E_i$ in~\eqref{eq:conformity-function} should grow as $T_n = n$.
\item The true $P_{Y \mid X}$ is continuous and with bounded density within a finite domain. This assumption is technical and could be relaxed with more work.
\item The true $P_{Y \mid X}$ is unimodal. This assumption is also technical and could be relaxed with more work.
\item The estimated conditional histogram $\hat{\pi}$ preserves the boundedness and unimodality of $P_{Y \mid X}$; this assumption may be unnecessary but it is convenient and quite innocuous at this point given Assumptions~\ref{assumption:consistency}--\ref{assumption:unimodality}.
\end{enumerate}
For simplicity, we assume the number of observations is $2n$, the test point is $(X_{2n+1}, Y_{2n+1})$, and $\mathcal{D}^{\mathrm{train}} = \mathcal{D}^{\mathrm{cal}} = n$, although different relative sample sizes would yield the same results. 

\begin{theorem}[Asymptotic conditional coverage and optimality] \label{thm:oracle-approximation}
$\forall \alpha \in (0,1]$, let $\hat{C}^{\mathrm{sc}}_{n,\alpha}(X_{2n+1})$ denote the prediction interval for $Y_{2n+1}$ computed by~Algorithm~\ref{alg:sc} at level $1-\alpha$ without randomization.
Under Assumptions~\ref{assumption:iid}--\ref{assumption:smoothing}, $\hat{C}^{\mathrm{sc}}_{n,\alpha}(X_{2n+1})$ is asymptotically equivalent, as $n \to \infty$, to $C^{\mathrm{oracle}}_{\alpha}(X_{2n+1})$, the output of the oracle~\eqref{eq:oracle-interval}--\eqref{eq:oracle-interval-int}. In particular, the following two properties hold.
\begin{enumerate} [label=(\roman*)]
  \item Asymptotic oracle length. For some sequences $\gamma_n \to 0$ and $\xi_n \to 0$ as $n \to \infty$,
\begin{align*}
  \P{|\hat{C}^{\mathrm{sc}}_{n,\alpha}(X_{2n+1})| \leq |C^{\mathrm{oracle}}_{\alpha}(X_{2n+1})| + \gamma_n} \geq 1 - \xi_n.
\end{align*}

  \item Asymptotic conditional coverage. For some sequences $\epsilon_n \to 0$ and $\zeta_n \to 0$ as $n \to \infty$,
\begin{align*}
  \P{ \P{Y \in \hat{C}^{\mathrm{sc}}_{n,\alpha}(X_{2n+1}) \mid X_{2n+1}} \geq 1-\alpha - \epsilon_n } \geq 1 - \zeta_n.
\end{align*}
\end{enumerate}
\end{theorem}

Theorem~\ref{thm:oracle-approximation} is similar to results in~\cite{lei2018distribution} and~\cite{sesia2020comparison} about the efficiency of earlier approaches to conformal regression, including CQR~\cite{sesia2020comparison}. However, the increased flexibility of our method is reflected by the oracle in~Theorem~\ref{thm:oracle-approximation}, which is stronger than those in \cite{lei2018distribution, sesia2020comparison}. In fact, the  oracle in~\cite{lei2018distribution} does not have conditional coverage, and that in~\cite{sesia2020comparison} produces wider prediction intervals with constant lower and upper miscoverage rates. Other conformal methods based on non-parametric density estimation~\cite{izbicki2019flexible,chernozhukov2019distributional} are not as efficient as CHR, in the sense that Theorem~\ref{thm:oracle-approximation} does not hold for them.

\section{Numerical experiments} \label{sec:experiments}

\subsection{Software implementation} \label{sec:software}

A Python implementation of CHR is available online 
at~\url{https://github.com/msesia/chr},
 along with code to reproduce the following numerical experiments. This software divides the domain of $Y$ into a desired number of bins with equal sizes, depending on the range of values observed in the training data; we use 100 bins for the synthetic data and 1000 for the real data. Then, we estimate the conditional histograms $\hat{\pi}$ using different black-box quantile regression models~\cite{taylor2000quantile,meinshausen2006quantile}, with a grid of quantiles ranging from 1\% to 99\%; see Supplementary Section~\ref{sec:supp-histograms}.
Our software also supports Bayesian additive regression trees~\cite{chipman2010bart} and could easily accommodate other alternatives.
For simplicity, we apply CHR and other benchmark methods by assigning equal numbers of samples to the training and calibration sets; this ensures all comparisons are fair, although different options may lead to even shorter intervals~\cite{sesia2020comparison}. See~\cite{bates2021testing} for a rigorous discussion of how this choice affects the variability of the coverage conditional on the calibration data, which is an issue we do not explore here.
See Supplementary Section~\ref{sec:supp-data} for details about how the models are trained, and information about the necessary computational resources. 

\subsection{Synthetic data} \label{sec:exp-synthetic}

We simulate a synthetic data set with a one-dimensional feature $X$ and a continuous response $Y$, from the same distribution previewed in Figure~\ref{fig:example-1}, which is similar to that utilized in~\cite{romano2019conformalized} to present CQR.
Our method is applied to 2000 independent observations from this distribution, using the first 1000 of them for training a deep quantile regression model, and the remaining ones for calibration.
Figure~\ref{fig:example-1} visualizes the resulting prediction bands for independent test data, comparing them to the analogous quantities output by CQR.
Both methods are based on the same neural network and guarantee 90\% marginal coverage, but ours leads to narrower intervals.
Indeed, the advantage of CHR is that it can extract information from all conditional quantiles estimated by the base model and then automatically adapt to the estimated data distribution. By contrast, CQR~\cite{romano2019conformalized} can only leverage a pre-specified lower and upper quantile (e.g., 5\% and 95\% in this example), and is therefore not adaptive to skewness.

Figure~\ref{fig:research-example-1-boxes}~(a) summarizes the performance of CHR over 100 experiments based on independent data sets, as a function of the sample size.
We evaluate the marginal coverage, approximate the worst-slab conditional coverage~\cite{cauchois2020knowing} as in~\cite{romano2020classification}, and compute the average interval width. 
We consider two benchmarks in addition to CQR~\cite{romano2019conformalized}: distributional conformal prediction (DCP)~\cite{chernozhukov2019distributional} and DistSplit~\cite{izbicki2019flexible}.
To facilitate the comparisons, all methods have the same base model.
 (We also applied DistSplit as implemented by~\cite{izbicki2019flexible}, with a different base model, but the version presented here performs better.)
These results show CHR leads to the shortest prediction intervals, while simultaneously achieving the highest conditional coverage. 
Compatibly with Theorem~\ref{thm:oracle-approximation}, the output of CHR becomes roughly equivalent to that of the omniscient oracle as the sample size grows; the latter can be implemented exactly here because we know the true data generating process.

Supplementary Figure~\ref{fig:research-example-1-boxes-naive} compares the performance of CHR in these experiments to that of naive uncalibrated 90\% prediction intervals based on the same deep neural network regression model, and obtained by simply plugging $\hat{\pi}$ into the oracle in~\eqref{eq:def-opt-problem}, with $\tau = 0.9$. Unsurprisingly, the naive prediction intervals do not generally have the desired marginal coverage; in this case, they tend to be too narrow if the sample size is small and too wide if the sample size is large. Although the lack of coverage for the uncalibrated intervals is not very pronounced here because this black-box model is relatively accurate even with $n=100$, such naive approach can yield arbitrarily low coverage in general, especially if the learning task is more difficult (e.g., for high-dimensional $X$), and is thus not reliable.

\begin{figure}[!htb]
  \centering
  \begin{minipage}[t]{0.03\textwidth}
    (a)
  \end{minipage}
  \begin{minipage}[]{0.95\textwidth}
    \includegraphics[width=\textwidth]{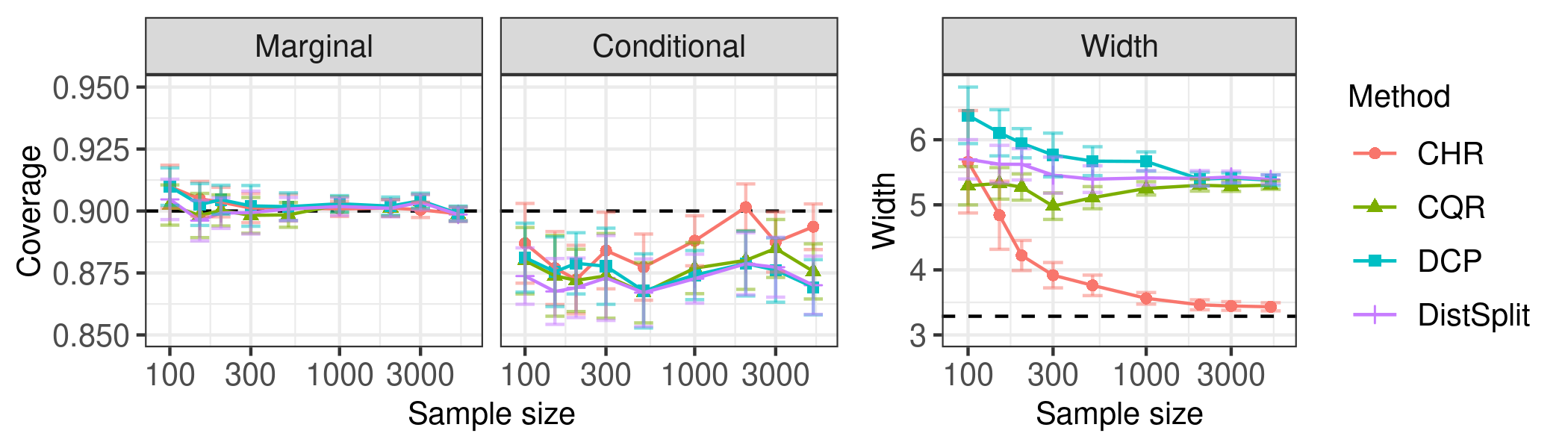}
  \end{minipage}

  \begin{minipage}[t]{0.03\textwidth}
    (b) 
  \end{minipage}
  \begin{minipage}[]{0.95\textwidth}
  \includegraphics[width=\textwidth]{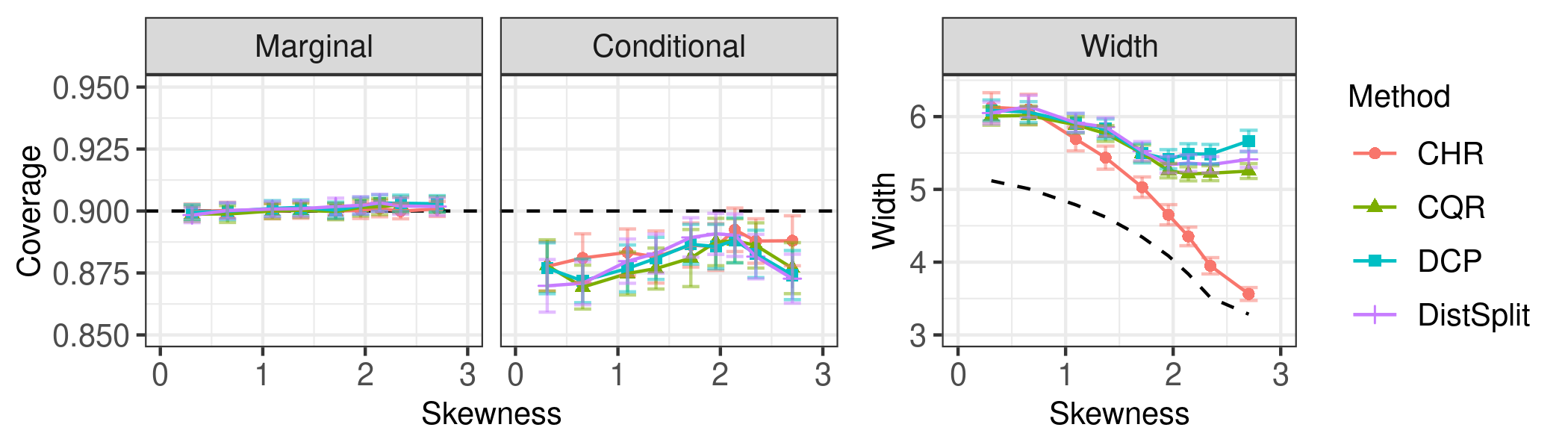}
  \end{minipage}
\caption{Performance of our method (CHR) and benchmarks on synthetic data distributed as in Figure~\ref{fig:example-1}. The dashed lines and curves correspond to an omniscient oracle. The vertical error bars span two standard errors from the mean. (a) Performance vs.~sample size. (b) Performance vs.~average skewness of the conditional distribution of the response, with a sample size of $1000$. The maximum skewness (near 3) matches that of the data in~(a).} 
  \label{fig:research-example-1-boxes}
\end{figure}

Figure~\ref{fig:research-example-1-boxes}~(b) shows analogous results from experiments in which we fix the sample size to 1000 and vary instead the skewness of the data distribution. Precisely, we flip a biased coin for each data point and transform $Y$ into $-Y$ if it lands heads, varying the coin bias as a control parameter. At one end of this spectrum, we recover the same skewed data distribution as in Figure~\ref{fig:research-example-1-boxes}~(a); at the other end, we have a symmetric $P_{Y \mid X}$. Our results are reported as a function of the expected skewness, defined as $\mathbb{E}[(Y-\mu(X))^3/\sigma^3(X) ]$, where $\mu(X)$ and $\sigma(X)$ are the mean and standard deviation of $Y \mid X$, respectively. These experiments show all methods are equivalent in terms of interval length if $P_{Y \mid X}$ is symmetric (skewness close to 0), while CHR can be much more powerful if $P_{Y \mid X}$ is skewed.

\subsection{Real data} \label{sec:real-data}

We apply CHR to the following seven public-domain data sets also considered in~\cite{romano2019conformalized}: physicochemical properties of protein tertiary structure (bio)~\cite{data-bio}, blog feedback (blog)~\cite{data-blog}, Facebook comment volume~\cite{data-facebook}, variants one (fb1) and two (fb2), from the UCI Machine Learning Repository~\cite{Dua:2019}; and medical expenditure panel survey number 19 (meps19)~\cite{data-meps19}, number 20 (meps20)~\cite{data-meps20}, and number 21 (meps21)~\cite{data-meps21}, from~\cite{cohen2009medical}. We refer to~\cite{romano2019conformalized} for more details about these data. As in the previous section, we would like to compare CHR to CQR, DistSplit, and DCP. 
However, as DCP~\cite{chernozhukov2019distributional} is unstable on all but one of these data sets, sometimes leading to very wide intervals, we replace it instead with a new hybrid benchmark that we call DCP-CQR.
This improves the stability of DCP by combining it with CQR~\cite{romano2019conformalized}, as explained in Supplementary Section~\ref{sec:supp-hybrid}.
This limitation of DCP may be explained by noting the method needs to learn a reasonably accurate approximation of the full conditional distribution of $Y \mid X$, and its performance is particularly sensitive to the estimation of the tails, which is most difficult; see Supplementary Section~\ref{sec:supp-hybrid} for more details.
 By contrast, CHR is robust because it only needs to estimate a histogram with relatively few bins---a much easier statistical task---and then it specifically focuses on finding the shortest intervals containing high probability mass.
We apply all methods, including our CHR, based on the same deep quantile regression model.
Their performances are evaluated as in the previous section, averaging over 100 independent experiments per data set. In each experiment, 2000 samples are used for training, 2000 for calibration, and the remaining ones  for training. All features are standardized to have zero mean and unit variance. The nominal coverage rate is 90\%.

\begin{figure}[!htb]
  \centering
    \includegraphics[width=\textwidth]{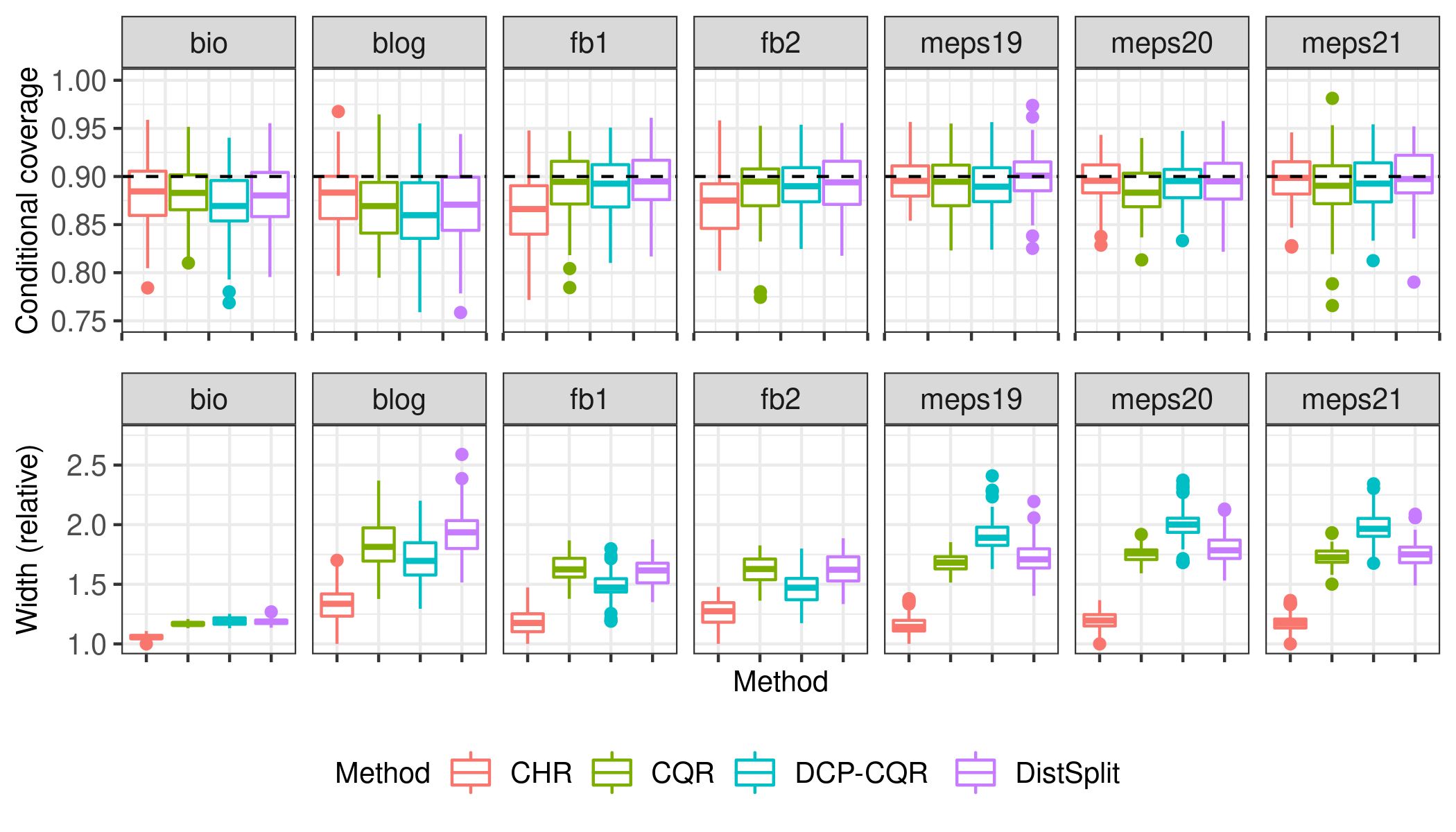}
    \caption{Performance of our method (CHR) and benchmarks on several real data sets, using a deep neural network model. All methods provably have 90\% marginal coverage. The box plots show the distribution of results over 100 random test sets, each containing 2000 observations. } 
  \label{fig:results_real_nnet}
\end{figure}

Figure~\ref{fig:results_real_nnet} shows the distribution of the conditional coverage and interval width corresponding to different methods, separately for each data set. To simplify the plots by using a shared vertical axis, the widths of the prediction intervals are scaled, separately for each data set, so that the smallest one is always equal to one.
Marginal coverage is omitted here because all methods provably control it; however, it can be found in Supplementary Table~\ref{tab:results_real}.
All methods perform well in terms of worst-slab conditional coverage. 
CHR outperforms the others in terms of statistical efficiency, as it consistently leads to the shortest intervals. CQR and DistSplit are comparable to each other, while DCP-CQR sometimes outputs wider intervals. Supplementary Figure~\ref{fig:results_real_rf} shows that analogous results are obtained if we utilized a random forest model instead of a deep neural network. Supplementary Table~\ref{tab:results_real} summarizes these results in more detail, including marginal coverage and the omitted performance of the original DCP.
Finally, Supplementary Figure~\ref{fig:results_real_nnet_naive} compares the performance of CHR in these experiments with real data to that of naive uncalibrated 90\% prediction intervals based on the same deep neural network regression model, as in Figure~\ref{fig:research-example-1-boxes-naive}. These results show that the naive prediction intervals do not generally have the desired marginal coverage; in some cases they are too narrow, and in others they are too wide.

\section{Conclusions} \label{sec:conclusions}


This paper developed CHR, a non-parametric regression method based on novel conformity scores leading to shorter prediction intervals with coverage, and enjoying stronger asymptotic efficiency, compared to the state-of-the-art alternatives. 
Of course, real data are finite and our theory relies on assumptions which may be difficult to validate; nonetheless, it is a sanity check and it provides an informative comparison. Further, the experiments confirm CHR performs well in practice.

The ability of CHR to automatically adapt to unknown skewness may prove useful in practice, as empirical data often follow distributions with power-law tails~\cite{clauset2009power}. Indeed, the data sets analyzed in Section~\ref{sec:real-data} tend to have highly skewed outcomes with many observations equal to zero.
At the same time, a limitation of CHR is that it does not rigorously control the lower and upper miscoverage rates, which may be important for some applications; if that is the case, the modified version of CQR proposed by~\cite{romano2019conformalized} would be a better choice. Note that the standard implementations of CQR and of the other benchmarks~\cite{chernozhukov2019distributional,izbicki2019flexible} considered in this paper are not guaranteed to separately control the lower and upper miscoverage rates. In any case, users of our method could naturally obtain approximations of the lower and upper miscoverage rates for any prediction interval by looking at the underlying conditional histograms, although these estimates are of course not calibrated in finite samples.

Algorithm~\ref{alg:cv+} in Supplementary Section~\ref{sec:supp-cv+} extends CHR to accommodate cross-validation+~\cite{barber2019predictive}, which is often more powerful, and computationally expensive, compared to the split-conformal approach presented in this paper. The strategy is the same as that followed by~\cite{romano2020classification} in the classification setting, although it requires an extra step, in which the standard cross-validation+ prediction set~\cite{barber2019predictive} is replaced by its convex hull to guarantee the final output is an interval~\cite{gupta2019nested}.
Supplementary Theorem~\ref{thm:CV+} establishes that Algorithm~\ref{alg:cv+} leads to marginal coverage above $1-2\alpha$, applying the more general theory from~\cite{gupta2019nested}.
Finally, a possible directions for future research may involve the extension of our method to deal with multi-dimensional responses $Y$.

From a broader perspective, this paper falls within a rapidly growing body of works focusing on improving the interpretability and statistical reliability of machine learning algorithms. Prediction intervals with marginal coverage provide a convenient way of communicating uncertainty about the accuracy of any machine learning black-box, which is important to increase their reliability, to ensure their fairness~\cite{romano2020malice}, and to facilitate their acceptance. Furthermore, conformal prediction intervals provide a principled metric by which to compare different machine learning algorithms~\cite{holland2020making}.


\begin{ack}
The authors are grateful to Stephen Bates, Emmanuel Cand{\`e}s, and Wenguang Sun for providing insightful comments about an earlier version of this manuscript.
M.S.~thanks the center for Advanced Research Computing at the University of Southern California for providing computing resources. Y.R.~was supported by the Israel Science Foundation (grant 729/21). Y.R.~also thanks the Career Advancement Fellowship, Technion, for providing research support.
\end{ack}

\bibliographystyle{abbrv}
\bibliography{biblio}

\pagebreak

\setcounter{equation}{0}
\setcounter{figure}{0}
\setcounter{table}{0}
\setcounter{page}{1}
\setcounter{section}{0}
\renewcommand{\thesection}{S\arabic{section}}
\renewcommand{\theequation}{S\arabic{equation}}
\renewcommand{\thetheorem}{S\arabic{theorem}}
\renewcommand{\thelemma}{S\arabic{lemma}}
\renewcommand{\thetable}{S\arabic{table}}
\renewcommand{\thefigure}{S\arabic{figure}}
\renewcommand{\thealgocf}{S\arabic{algocf}}

\begin{center}
  \textbf{\Large Supplementary Material for:\\Conformal Prediction using Conditional Histograms}\\[.2cm]
  Matteo Sesia$^{1}$, and Yaniv Romano$^{2}$\\[.1cm]
  {\itshape ${}^1$Department of Data Sciences and Operations, University of Southern California, USA\\
  ${}^2$Departments of Electrical and Computer Engineering and of Computer Science, Technion, Israel\\}
\end{center}

\section{Supplementary methods} \label{sec:supp-methods}

\subsection{Estimating conditional distributions and histograms} \label{sec:supp-histograms}

For any fixed $K>1$, define the sequence $a_k = k/K$ for $k \in \{0,\ldots,K\}$, and let $\hat{q}(x) = (\hat{q}_{a_0}(x), \ldots, \hat{q}_{a_K}(x))$ denote a collection of $K+1$ conditional quantile estimators,\footnote{Recall the definition of conditional quantiles: each $\hat{q}_{c}(x)$ is an estimate of the true $c$-th conditional quantile of $Y \mid X=x$: that is, the smallest value of $y$ such that $\P{Y \leq y \mid X=x} \geq c$.} where $\hat{q}_{a_k}(x)$ attempts to approximate the $a_k$-th quantile of the conditional distribution of $Y \mid X=x$, such that $\hat{q}_{a_{k}}(x) \leq \hat{q}_{a_{k+1}}(x)$ for all $k$ and $x$. Note that we allow multiple estimated quantiles to be identical to each other, to accommodate the possibility of point masses.
Furthermore, we assume $\hat{q}_0(x)$ and $\hat{q}_1(x)$ are conservative upper and lower bounds for the support of $Y\mid X=x$, i.e., $\hat{q}_0(X) = b_0 < Y < b_m = \hat{q}_1(X)$.
We will discuss in the next section practical options for estimating $\hat{q}(x)$.
Now, we leverage any given $\hat{q}(x)$ to compute estimates $\hat{\pi}_j(x)$ of the unknown bin probabilities $\pi_j(x)$ in~\eqref{eq:def-pi}, for all $j \in \{1,\ldots,m\}$.
Although there are multiple way of doing this, a principled solution is to convert the information contained in $\hat{q}$ into a piece-wise constant density estimate, and then integrate that density within each bin.
Precisely, for any fixed $x$, let $\hat{c}(x) = (\hat{c}_0(x), \ldots, \hat{c}_{\bar{m}(x)}(x))$ denote the strictly increasing sequence of $\bar{m}(x) \leq m$ unique values in $\hat{q}(x)$, and define our estimated conditional density $\hat{f}$ as
\begin{align*}
  \hat{f}(y \mid x) = \frac{1}{\bar{m}(x)} \sum_{j=1}^{\bar{m}(x)} h_j(x) \I{\hat{c}_{j-1}(x) < y < \hat{c}_{j}(x)},
\end{align*}
with
\begin{align*}
  h_{j}(x) = \frac{\# \{ j' \in \{0,\ldots,m\} : \hat{q}_{a_{j'}}(x) = \hat{c}_j(x)\}}{m \cdot \left[ \hat{c}_{j}(x) - \hat{c}_{j-1}(x) \right]}.
\end{align*}
Intuitively, $\hat{f}$ is a histogram with $\bar{m}(x)$ bins, whose delimiters are $(\hat{c}_0(x), \ldots, \hat{c}_{\bar{m}(x)}(x))$ and whose heights are $(h_1(x), \ldots, h_{\bar{m}(x)}(x))$. The numerator in the expression for $h_{j}(x)$ counts the number of estimated quantiles that are identical to the $\hat{c}_{j}$-th one, accounting for the possible presence of point masses in the approximation of $P_{Y \mid X}$ captured by $\hat{q}(x)$.

As the tails of the above estimated conditional density may be particularly inaccurate because relatively little information is available to estimate extremely low or high quantiles, we smooth them. This ensures any estimation errors will not make $\hat{f}$ decay too fast, forcing one to look much farther than necessary in the tails before finding sufficient mass for the desired prediction intervals. The smoothing approach we adopt simply consists of making $\hat{f}$ constant between $b_0$ (the uniform lower bound on $Y$) and the 1\% quantile, as well as between the 99\% quantile and $b_m$ (the uniform upper bound on $Y$), distributing these 1\% probability masses uniformly in the tails.\footnote{We thank Stephen Bates for suggesting a smoothing strategy which inspired this solution.}

We utilize the same estimated conditional distribution thus obtained for our method as well as for our implementations of DCP and DistSplit, because it performs relatively well for all of them. In particular, our method leverages this distribution to construct a conditional histogram as follows.
The probability mass $\hat{\pi}_j(x)$ for the bin $[b_{j-1},b_j)$ is given by:
\begin{align} \label{eq:histogram-estimator}
  \hat{\pi}_j(x) = \int_{b_{j-1}}^{b_j} \hat{f}(y \mid x) dy,
\end{align}
which is easy to compute because $\hat{f}$ is piece-wise constant.
Finally, $\sum_{j=1}^{m} \hat{\pi}_j(x) = 1$ by construction.

When implemented with a deep neural network~\cite{moon2021learning}, the multi-quantile regression method described above has computational cost comparable to that of the bi-quantile regression model utilized by CQR~\cite{romano2019conformalized}. 
Indeed, the numbers of parameters and the architecture of the neural network are essentially the same in
both cases, the only difference is that our model has a wider output layer.
Therefore, the computational cost and training runtime are
approximately the same. Intuitively, this can be understood as thinking of the neural network as
learning an approximate representation of the conditional distribution of $Y \mid X$, regardless of how
many different quantiles are explicitly estimated. Of course, that is not to say that estimating many
quantiles is as easy as estimating only two, but most of the additional statistical difficulty would
come from estimating extremely large or small quantiles, not the intermediate ones. Precisely to
avoid this problem, our model does not attempt to estimate extremely large or small quantiles
(below 1\% or above 99\%); instead, the tails are smoothed as explained above.

\subsection{Randomized prediction intervals} \label{sec:supp-random}

Due to the discrete nature of the optimization problem in~\eqref{eq:def-opt-problem}, the inequality involving $\tau$ may not be binding at the optimal solution. 
Therefore, to avoid producing wider intervals than necessary, we introduce some randomization. Let $\varepsilon$ be a uniform random variable between 0 and 1 drawn independently of everything else. Then, define the following function $R$, which takes as input $[l,u] \subseteq \{1,\ldots,m\}$, $x$, $\pi$, $\varepsilon$, $\tau$, and outputs a sub-interval of $\{1,\ldots,m\}$:
\begin{align} \label{eq:def-randomized}
  R( [l,u], x, \varepsilon, \pi, \tau) \defeq
  \begin{cases}
    [l,u], & \text{if } \varepsilon > V([l,u], x, \pi, \tau), \\
    [l-1, u] , & \text{if } \varepsilon \leq V([l,u], x, \pi, \tau) \text{ and } \pi_l(x) \leq \pi_u(x), \\
    [l, u-1], & \text{if } \varepsilon \leq V([l,u], x, \pi, \tau) \text{ and } \pi_l(x) > \pi_u(x),
  \end{cases}
\end{align}
where the function $V$ is given by
\begin{align*}
  V([l,u], x, \pi, \tau) \defeq \frac{\sum_{j=l}^{u} \pi_j(x) - \tau}{\min\left\{ \pi_l(x), \pi_u(x) \right\}}.
\end{align*}
In words, $R$ returns a random subset of $[l,u]$ by removing the extreme bin with the smallest mass according to $\pi$, based on the outcome of a biased coin flip, if the total mass in the original interval exceeds $\tau$. Consequently, the total mass in $R( [l,u], x, \varepsilon, \pi, \tau)$ will on average be exactly equal to $\tau$ if $[l,u]$ is given by~\eqref{eq:def-opt-problem}. 

Inspired by the above oracle, the randomized version of our algorithm is implemented as follows.
First, fix any {\em starting point} $\bar{t} \in \{0,\ldots,T\}$ and define $S_{\bar{t}}$ by applying \eqref{eq:def-opt-problem} and \eqref{eq:def-randomized} without the nesting constraints (with $S^- = \emptyset$ and $S^+ = \{1,\ldots,m\}$):
\begin{align} \label{eq:def-s-start}
  & S^0_{\bar{t}}  \defeq \mathcal{S}(x, \pi, \emptyset, \{1,\ldots,m\}, \tau_{\bar{t}}),
  & S_{\bar{t}}  \defeq R( S^0_{\bar{t}}, x, \varepsilon, \pi, \tau_{\bar{t}}).
\end{align}
Having computed the initial interval $S_{t}$ for $t = \bar{t}$, we recursively extend the definition to the wider intervals indexed by $t = \bar{t} + 1, \ldots, T$ as follows:
\begin{align} \label{eq:def-s-inc}
\begin{split}
  S^0_{t} & \defeq \mathcal{S}(x, \pi, S_{t-1}, \{1,\ldots,m\}, \tau_{t}), \\
  S_{t} & \defeq 
                    \begin{cases}
                      R( S^0_{t}, x, \varepsilon, \pi, \tau_{t}), & \text{if } S_{t-1} \subseteq R( S^0_{t}, x, \varepsilon, \pi, \tau_{t}), \\
                      S^0_{t}, & \text{otherwise.}
                    \end{cases}
                  \end{split}
\end{align}
Intuitively, the randomization step in~\eqref{eq:def-s-inc} is applied only if it does not violate the nesting constraints, ensuring $S_{\bar{t}} \subseteq S_{\bar{t}+1} \subseteq \ldots \subseteq S_{T}$. See the top row of Figure~\ref{fig:intervals-schematic} for a schematic of this step.
Similarly, the narrower intervals $S_{t}$ indexed by $t = \bar{t}-1, \bar{t}-2, \ldots 0$ are defined recursively as:
\begin{align} \label{eq:def-s-dec}
  & S^0_{t} \defeq \mathcal{S}(x, \pi, \emptyset, S^{0}_{t+1}, \tau_{t}),
  & S_{t} \defeq R( S^0_{t}, x, \varepsilon, \pi, \tau_{t}).
\end{align}
See the bottom row of Figure~\ref{fig:intervals-schematic} for a schematic of this step.
Note that $\mathcal{S}$ in~\eqref{eq:def-opt-problem} is applied here in~\eqref{eq:def-s-dec} with $S^+ = S^{0}_{t+1}$ to ensure the optimization problem has a feasible solution; this may not necessarily be the case with $S^+ = S_{t+1}$, as the latter is randomized and may therefore sometimes contain less mass than necessary, according to the input $\pi$. Nonetheless, the sequence of intervals $\{ S_{t} \}_{t=0}^{T}$ thus obtained is provably nested, as previewed in Figure~\ref{fig:intervals-schematic}.
In the following, it will be convenient to highlight the dependence of this sequence on $x,\varepsilon,\pi$ by writing it as $S_t(x,\varepsilon,\pi)$.

\begin{proposition} \label{prop:nested-sequence}
The sequence of intervals $\{ S_{t} \}_{t=0}^{T}$ defined recursively by~\eqref{eq:def-s-start}--\eqref{eq:def-s-dec}, and depending on $x, \varepsilon, \pi$, always satisfies $S_{t-1} \subseteq S_t$ for all $t \in \{1,\ldots,T\}$.
\end{proposition}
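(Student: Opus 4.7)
The plan is to induct on $t$ in both directions from the starting index $\bar{t}$, using~\eqref{eq:def-s-inc} for $t > \bar{t}$ and~\eqref{eq:def-s-dec} together with the base case~\eqref{eq:def-s-start} for $t \leq \bar{t}$. The upward direction is essentially built into the recursion: by the lower nested-constraint of $\mathcal{S}$ in~\eqref{eq:def-opt-problem}, the definition $S^0_t = \mathcal{S}(x,\pi, S_{t-1}, \{1,\ldots,m\}, \tau_t)$ forces $S_{t-1} \subseteq S^0_t$, and the explicit case split defining $S_t$ in~\eqref{eq:def-s-inc} has been arranged so that, whether we take $S_t = R(S^0_t)$ or fall back to $S_t = S^0_t$, we retain $S_{t-1} \subseteq S_t$.

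For the downward direction, I would first check that the un-randomized sequence is nested: the upper-bound constraint $S^+ = S^0_{t+1}$ in~\eqref{eq:def-s-dec} directly yields $S^0_t \subseteq S^0_{t+1}$, so by induction $S^0_0 \subseteq S^0_1 \subseteq \cdots \subseteq S^0_{\bar{t}}$. The remaining task is to verify that the randomization map $R$ preserves this nesting, that is $R(S^0_{t-1}) \subseteq R(S^0_t)$ for every $t \leq \bar{t}$. Since $R$ either returns its input or strips off one extreme bin, a short case analysis on whether $R$ fires for $S^0_{t-1}$, for $S^0_t$, for both, or for neither reduces everything to a single core claim: whenever $R$ strips an extreme bin $j \in \{l_t, u_t\}$ from $S^0_t$, that bin is not contained in $S^0_{t-1}$. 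Granted this, $S_{t-1} \subseteq S^0_{t-1} \subseteq S^0_t \setminus \{j\} = S_t$.

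The main obstacle is the core claim above. To attack it, I would split on whether $S^0_{t-1}$ shares an endpoint with $S^0_t$: if $S^0_{t-1}$ is strictly interior to $S^0_t$, then both $l_t$ and $u_t$ are automatically outside $S^0_{t-1}$ and the conclusion is immediate. If instead $l_{t-1} = l_t$, the plan is to combine the shortest-feasibility and tie-breaking optimality of $S^0_{t-1}$ inside $S^0_t$ guaranteed by~\eqref{eq:def-opt-problem} with the rule governing $R$, which strips the endpoint with the smaller $\pi$-value, in order to derive a contradiction from the assumption that $R$ strips $l_t$; the symmetric case $u_{t-1} = u_t$ is handled analogously. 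This combinatorial step --- reconciling which endpoint $\mathcal{S}$ chooses to preserve in $S^0_{t-1}$ with which endpoint $R$ chooses to drop --- is the delicate piece of the argument; once it is in place, the rest of the proof is routine bookkeeping.
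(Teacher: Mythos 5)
Your overall strategy --- upward recursion built in, downward recursion via nesting of the unrandomized $S^0_t$ plus an argument that the randomization map $R$ preserves nesting --- mirrors the paper's proof. However, your reduction to the single ``core claim'' (``whenever $R$ strips an extreme bin $j$ from $S^0_t$, that bin is not contained in $S^0_{t-1}$'') has a genuine gap: the claim is simply false when $S^0_{t-1} = S^0_t$, which can certainly occur (e.g.\ whenever the shortest feasible interval at level $\tau_{t-1}$ already has mass $\ge \tau_t$). In that case the stripped bin is necessarily an element of $S^0_{t-1}$, and the contradiction you plan to derive in the ``shared endpoint'' branch has nothing to bite on, because $S^0_{t-1}$ and $S^0_t$ really are the same interval and there is no alternative feasible solution to exhibit.

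The paper carves out this degenerate sub-case and treats it by a direct identity rather than by contradiction: since the same $\varepsilon$ is used at every $t$, and $V([l,u],x,\pi,\tau) = \bigl(\sum_{j=l}^{u}\pi_j(x)-\tau\bigr)/\min\{\pi_l(x),\pi_u(x)\}$ is strictly decreasing in $\tau$, the event $\{\varepsilon \le V(\cdot,\tau_t)\}$ implies $\{\varepsilon \le V(\cdot,\tau_{t-1})\}$; moreover the choice of which endpoint to drop depends only on comparing $\pi_l(x)$ with $\pi_u(x)$, so it is identical for the two equal intervals. Hence if $R$ fires at step $t$ it also fires at step $t-1$ and removes the very same bin, giving $S_{t-1} = S_t$ directly. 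Once you split off $S^0_{t-1} = S^0_t$ and handle it this way, your remaining argument for $S^0_{t-1} \subsetneq S^0_t$ with a shared endpoint --- using the tie-breaking optimality in~\eqref{eq:def-opt-problem} together with the rule that $R$ drops the endpoint of smaller $\pi$-mass --- lines up with what the paper does and should go through.
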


Proposition~\ref{prop:nested-sequence} is proved below. Again, note that this results holds regardless of the starting point $\bar{t}$ in~\eqref{eq:def-s-start}, although the most intuitive choice is to pick $\bar{t}$ such that $\tau_{\bar{t}} \approx 1-\alpha$.

\begin{proof}[Proof of Proposition~\ref{prop:nested-sequence}]
First, we show $S_{t-1} \subseteq S_t$ for all $t = \bar{t} + 1, \ldots, T$.
We know from~\eqref{eq:def-s-inc} that there are two possibilities. (i) If $S_{t-1} \subseteq R( S^0_{t}, x, \varepsilon, \pi, \tau_{t})$, then $S_{t} = R( S^0_{t}, x, \varepsilon, \pi, \tau_{t})$ and so $S_{t-1} \subseteq S_{t}$. (ii) Otherwise, $S_{t} = S^0_{t} = \mathcal{S}(x, \pi, S_{t-1}, \{1,\ldots,m\}, \tau_{t})$, which contains $S_{t-1}$ by definition of $\mathcal{S}$ in~\eqref{eq:def-opt-problem}.

Second, we show $S_{t} \subseteq S_{t+1}$ for all $t = \bar{t} - 1, \ldots, 0$, using~\eqref{eq:def-s-dec}. Here, we can also distinguish between two possibilities. (i) If $S_{t+1} = S^{0}_{t+1}$, then $S^0_{t} \subseteq S_{t+1}$ by definition of $\mathcal{S}$ in~\eqref{eq:def-opt-problem}, and so $S_{t} \subseteq S_{t+1}$ because $S_{t} \subseteq S^0_{t}$. (ii) Otherwise, $S_{t+1}$ must have been randomized and this is the least obvious case on which we focus below. 

Suppose $S_{t+1} \subset S^{0}_{t+1}$. We know from the definition of $\mathcal{S}$ in~\eqref{eq:def-opt-problem} that $S^0_{t} \subseteq S^0_{t+1}$. On the one hand, if $S^0_{t} = S^0_{t+1}$, it is easy to see from~\eqref{eq:def-randomized} that $S_{t} = S_{t+1}$ because $\tau_{t} < \tau_{t+1}$, and so the same bin randomly removed from $S^0_{t+1}$ will also certainly be removed from $S_{t}$. On the other hand, if $S^0_{t} \subset S^0_{t+1}$, it must be the case that $S^0_{t} \subseteq S_{t+1}$ because $S_{t+1}$ is obtained by removing the boundary bin of $S^0_{t+1}$ with the smallest mass. Therefore, $S^0_{t}$ cannot include the aforementioned bin without also satisfying $S^0_{t} = S^0_{t+1}$, for otherwise it would be possible to find an alternative $S^{0'}_{t}$ with equal length and smaller but still feasible mass above $\tau_{t}$, which is inconsistent with optimality of $S^0_{t}$ according to the definition of $\mathcal{S}$ in~\eqref{eq:def-opt-problem}. This implies $S_{t} \subset S^0_{t+1}$ because $S_{t} \subset S^0_{t}$, completing the proof.
\end{proof}

\subsection{The DCP-CQR hybrid method} \label{sec:supp-hybrid}

The DCP-CQR hybrid repurposes the DCP calibration algorithm~\cite{chernozhukov2019distributional} to adaptively choose which lower and upper estimated quantiles should be extracted from the machine-learning model; then, it takes these as a starting point for CQR~\cite{romano2019conformalized}.
By contrast, the original CQR requires one to pre-specify which two conditional quantiles should be estimated by the machine learning model. For example, we implement CQR by estimating the $\alpha/2$ and $1-\alpha/2$ quantiles, as this is the most intuitive choice and it guarantees the method is asymptotically efficient~\cite{sesia2020comparison}, although in a weaker sense compared to the oracle property established by Theorem~\ref{thm:oracle-approximation} for CHR.

The reason why DCP-CQR is more stable than DCP is that our hybrid only considers a limited grid of possible quantiles (e.g., 1\% to 99\%). If the machine learning model is very inaccurate and the fixed quantile grid turns out to be insufficient to reach 90\% coverage (assuming $\alpha = 0.1$) on the calibration data, then we can simply rely on CQR to correct the coverage by adding a constant shift to the prediction bands. By contrast, the original DCP~\cite{chernozhukov2019distributional} may sometimes rely on extreme quantiles (e.g., 99.99\%) of the conditional distribution estimated by the fitted model, which are unreliable. 

\subsection{Calibration with cross-validation+} \label{sec:supp-cv+}

Algorithm~\ref{alg:cv+} extends Algorithm~\ref{alg:sc} to accommodate a calibration scheme alternative to data splitting: cross-validation+~\cite{barber2019predictive}. While we do not fully review cross-validation+ for lack of space, readers aware of the work of~\cite{barber2019predictive}, or~\cite{gupta2019nested}, will recognize this as a straightforward combination of their techniques with our novel conformity scores.

\begin{algorithm}[!htb]
  \SetAlgoLined
  \textbf{Input:} data $\left\{(X_i, Y_i)\right\}_{i=1}^{n}$, $X_{n+1}$, partition $\mathcal{B}$ of the domain of $Y$, level $\alpha \in (0,1)$, resolution $T$ for the conformity scores, starting index $\bar{t}$ for recursive definition of conformity scores, machine-learning algorithm for estimating conditional distributions. \\
  Randomly split the training data into $K$ disjoint subsets, $\mathcal{D}_1,\dots,\mathcal{D}_K$, each of size $n/K$. \\
  Sample $\varepsilon_i \sim \text{Uniform}(0,1)$ for each $i \in \{1,\ldots,n+1\}$, independently of everything else. \\
  \For{$k \in \{1, \ldots, K\}$} {
  Train any estimate $\hat{\pi}^k$ of the mass of $Y \mid X$ for each bin in $\mathcal{B}$, e.g., with~\eqref{eq:histogram-estimator}, based on all data points except those in $\mathcal{D}_k$.
    }
  Use the function $E$ defined in~\eqref{eq:conformity-function} to construct the prediction interval
  \begin{align} \label{eq:predictions-CV+}
    \hat{C}^{\text{CV+}}_{n,\alpha}(X_{n+1}) = \text{Conv}\left( C  \right),
  \end{align}
  where  $\text{Conv}(C)$ is the convex hull of the set $C$, which is defined as
  \begin{align}
    C
    = \left\{ y: \frac{1}{n} \sum_{i=1}^n \mathbf{1}\left[ E(X_i,Y_i,\varepsilon_i, \hat{\pi}^{k(i)}) < E(X_{n+1},y,\varepsilon_{n+1}, \hat{\pi}^{k(i)}) \right] < 1 - \alpha_n \right\},
  \end{align}
  with $\alpha_n = \alpha (1+1/n) - 1/n $ and $k(i) \in \{1,\ldots,K\}$ is the fold containing the $i$-th sample. \\
  \textbf{Output:} A prediction interval $\hat{C}^{\text{CV+}}_{n,\alpha}(X_{n+1})$ for the unobserved label $Y_{n+1}$.
 \caption{CV+ adaptive predictive intervals for regression}
 \label{alg:cv+}
\end{algorithm}

\begin{theorem}[Adapted from Theorem~3 in~\cite{gupta2019nested}]  \label{thm:CV+}
Under the same assumptions of Theorem~\ref{thm:sc}, if $\hat{\pi}$ is invariant to permutations of its input samples, the output of~Algorithm~\ref{alg:cv+} satisfies:
\begin{align}
  \mathbb{P} \left[ \{Y_{n+1} \in \hat{C}^{\mathrm{CV+}}_{n,\alpha}(X_{n+1}) \right] \geq 1-2\alpha - \min \left\{\frac{2(1-1/K)}{n/K+1}, \frac{1-K/n}{K+1} \right\}.
\end{align}
In the special case where $K=n$, this bound simplifies to:
\begin{align}
  \mathbb{P} \left[ Y_{n+1} \in \hat{C}^{\mathrm{JK+}}_{n,\alpha}(X_{n+1}) \right] \geq 1-2\alpha.
\end{align}
\end{theorem}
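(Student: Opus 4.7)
The plan is to recognize Algorithm~\ref{alg:cv+} as a direct instantiation of the nested CV+ conformal framework of~\cite{gupta2019nested}, and to invoke their Theorem~3 essentially as a black box after checking its hypotheses. That framework requires three ingredients: (i) a nested sequence of candidate prediction sets indexed by a scalar $t$; (ii) conformity scores defined as the smallest $t$ for which the candidate set contains the observed response; and (iii) a fitting procedure that is symmetric under permutations of its training inputs. All three are already available to us: the nesting of $\{S_t(x,\varepsilon,\hat{\pi})\}_{t=0}^T$ is exactly the content of Proposition~\ref{prop:nested-sequence}, the score $E$ in~\eqref{eq:conformity-function} is of the required form, and the permutation invariance of $\hat{\pi}$ is stated as a hypothesis of the theorem.

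Before invoking~\cite{gupta2019nested}, I would verify the exchangeability that their argument actually uses. Form the augmented triples $(X_i,Y_i,\varepsilon_i)$ for $i\in\{1,\ldots,n+1\}$; these are exchangeable because $(X_i,Y_i)$ are exchangeable by assumption and the $\varepsilon_i$ are i.i.d.\ uniform and independent of everything else. Each fold model $\hat{\pi}^{k(i)}$ is trained by a permutation-symmetric procedure on the data outside $\mathcal{D}_{k(i)}$, and the score $E(X_i,Y_i,\varepsilon_i,\hat{\pi}^{k(i)})$ depends on $i$ only through $(X_i,Y_i,\varepsilon_i)$ together with that symmetric function of the held-out folds. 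This is precisely the K-fold CV+ instance treated in~\cite{gupta2019nested}, so applying their Theorem~3 to the set $C$ appearing in~\eqref{eq:predictions-CV+} gives
\begin{align*}
\mathbb{P}\!\left[Y_{n+1}\in C\right] \;\geq\; 1-2\alpha-\min\!\left\{\tfrac{2(1-1/K)}{n/K+1},\,\tfrac{1-K/n}{K+1}\right\}\!.
\end{align*}
The $K=n$ simplification to $1-2\alpha$ follows by substituting and noting that both arguments of the $\min$ vanish.

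The remaining step is the convex-hull wrapper distinguishing $\hat{C}^{\mathrm{CV+}}_{n,\alpha}(X_{n+1})=\mathrm{Conv}(C)$ from $C$ itself. Since $C\subseteq\mathrm{Conv}(C)$, the event $\{Y_{n+1}\in C\}$ is contained in $\{Y_{n+1}\in\hat{C}^{\mathrm{CV+}}_{n,\alpha}(X_{n+1})\}$, so the coverage lower bound transfers without modification. This wrapper is introduced only because, unlike in the split case where the nested output is already an interval by construction, the raw nested-CV+ output can fail to be one; taking the convex hull is conservative for marginal coverage and cost-free for the theorem.

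The main obstacle is bookkeeping around the external randomization $\varepsilon_i$: one must verify that swapping sample $i$ with the test point $n+1$ leaves the joint distribution of the conformity scores invariant, which is what the CV+ swap-based argument in~\cite{gupta2019nested} exploits. This works because the $\varepsilon_i$ are a fresh i.i.d.\ sequence attached symmetrically to each sample and because $\hat{\pi}$ is permutation-invariant in its training data; once these are in place, no additional combinatorial argument beyond the one in~\cite{gupta2019nested} is required.
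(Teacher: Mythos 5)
Your proposal is correct and follows essentially the same route the paper intends: the paper presents Theorem~\ref{thm:CV+} as directly "adapted from Theorem~3 in~\cite{gupta2019nested}," without a detailed standalone proof, and you invoke that result after verifying its three hypotheses (nestedness via Proposition~\ref{prop:nested-sequence}, the min-index form of the score $E$ in~\eqref{eq:conformity-function}, and symmetry of $\hat{\pi}$). Your explicit treatment of the augmented triples $(X_i,Y_i,\varepsilon_i)$ to account for the external randomization, and the observation that $C\subseteq\mathrm{Conv}(C)$ makes the convex-hull wrapper cost-free for marginal coverage, are the right supporting details; they fill in what the paper leaves implicit rather than departing from its argument.
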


\FloatBarrier

\section{Theoretical analysis} \label{sec:supp-proofs}

\subsection{Finite-sample analysis}

\begin{proof}[Proof of Theorem~\ref{thm:sc}]
  The interval $\hat{C}^{\mathrm{sc}}_{n,\alpha}(X_{n+1})$ is such that $Y_{n+1} \in \hat{C}^{\mathrm{sc}}_{n,\alpha}(X_{n+1})$ if and only if
  \begin{align*}
    \min \left\{ t \in \{0,\ldots,T\} : Y_{n+1} \in S_t(X_{n+1},\varepsilon_{n+1}, \hat{\pi}) \right\} \leq \hat{Q}_{1-\alpha}(\{E_i\}_{i \in \mathcal{D}^{\mathrm{cal}}}).
  \end{align*}
  Equivalently, $Y_{n+1} \in \hat{C}^{\mathrm{sc}}_{n,\alpha}(X_{n+1})$ if and only if 
  \begin{align} \label{eq:conformity-quantile}
    E_{n+1} \leq \hat{Q}_{1-\alpha}(\{E_i\}_{i \in \mathcal{D}^{\mathrm{cal}}}).
  \end{align}
  The proof is standard from here: the key idea is that the probability of the event in~\eqref{eq:conformity-quantile} is at least $1-\alpha$ because all conformity scores $\{E_i\}_{i=1}^{n+1}$ are exchangeable; see~\cite{romano2019conformalized} for details.
\end{proof}

\subsection{Asymptotic analysis}

\begin{assumption}[i.i.d.~data] \label{assumption:iid}
The data $\{(X_i,Y_i)\}_{i=1}^{2n+1}$ are i.i.d.~from some unknown joint distribution.
\end{assumption}

\begin{assumption}[consistency] \label{assumption:consistency}
For any fixed $n$, let $m_n$ denote the number of bins in the partition $\mathcal{B}$ of the space of $Y$ utilized by our method.
Let $F(y \mid x)$ denote the cumulative distribution function of $Y \mid X=x$, and define $\hat{F}(y \mid x)$ as the estimate of the latter according to $\hat{\pi}$ from~\eqref{eq:histogram-estimator}; i.e., 
\begin{align*}
  \hat{F}(y \mid x) \defeq \sum_{j=1}^{\hat{j}(y)} \hat{\pi}_j(x),
\end{align*}
where $\hat{j}(y) = \max \{ j \in \{1,\ldots,m_n\} : y \leq b_{j}\}$.
Then, assume there exists a sequence $\eta_n \to 0$, as $n \to \infty$, such that, for all $j \in \{1,\ldots,m_n\}$,
\begin{align} \label{eq:consistency}
    \P{\E{\left( \hat{F}(b_j \mid X) - F(b_j \mid X) \right)^2 \mid \mathcal{D}^{\mathrm{train}}} \leq \eta_{n}^{2} }
    & \geq 1-\eta_{n}^{2}.
\end{align}
Further, $m_n = \lfloor \eta_n^{-1} \rfloor$ and $T_n = n$, where $T_n$ is the resolution of the conformity scores $E_i$~\eqref{eq:conformity-function}.
\end{assumption}

\begin{assumption}[regularity] \label{assumption:regularity}
For any $x \in \mathbb{R}^p$, the conditional distribution of $Y \mid X=x$ is continuous with density $f(y \mid x)$ and support $[-C,C]$, for some finite $C>0$. Furthermore, $1/K < f(y \mid x) < K/2$ within $[-C,C]$, for some $K>0$.
\end{assumption}

\begin{assumption}[unimodality] \label{assumption:unimodality}
For any $x \in \mathbb{R}^p$, the conditional distribution of $Y \mid X=x$ is unimodal; i.e., there exists $y_0 \in [-C,C]$ (depending on $x$), such that $f(y_0 + y'' \mid x) \leq f(y_0 + y')$ if $y'' \geq y' \geq 0$, and $f(y_0 + y'' \mid x) \leq f(y_0 + y')$ if $y'' \leq y' \leq 0$.
\end{assumption}

\begin{assumption}[smoothing] \label{assumption:smoothing}
For any fixed $n$ and $x \in \mathbb{R}^p$, the estimated conditional distribution of $Y \mid X=x$ characterized by $\hat{\pi}(x)$ is unimodal. That it, there exists $j_0 \in \{1,\ldots,m_n\}$ such that $\hat{\pi}_{j_0 + k''}(x) \leq \hat{\pi}_{j_0 + k'}(x)$ if $k'' \geq k' \geq 0$, and $\hat{\pi}_{j_0 + k''}(x) \leq \hat{\pi}_{j_0 + k'}(x)$ if $y'' \leq y' \leq 0$, for all $k'',k'$ such that $j_0+k'' \in \{1,\ldots,m\}$ and $j_0+k' \in \{1,\ldots,m\}$. Furthermore, assume $\hat{\pi}_j \leq K$ for all $j \in \{1,\ldots,m_n\}$, for any $n$.
\end{assumption}

Note that, if $\hat{\pi}$ is based on a quantile model as described in Section~\ref{sec:supp-histograms}, Assumption~\ref{assumption:consistency} is closely related to the consistency assumption on the estimated conditional quantiles utilized by~\cite{sesia2020comparison} to study CQR~\cite{romano2019conformalized}, although the latter only involved two fixed quantiles. More precisely, leveraging Assumption~\ref{assumption:regularity}, one could rewrite~\eqref{eq:consistency} in terms of the consistency of the underlying quantile regressors,
  \begin{align} \label{eq:consistency-quantiles}
    \P{\E{\left( \hat{q}_{\tau_t}(X) - q_{\tau_t}(X) \right)^2 \mid \mathcal{D}^{\mathrm{train}}} \leq \tilde{\eta}_{n} }
    & \geq 1-\tilde{\rho}_{n},
  \end{align}
for some sequences $\tilde{\eta}_{n} \to 0$ and $\tilde{\rho}_{n} \to 0$ as $n \to \infty$.
The assumption in~\eqref{eq:consistency-quantiles} is also similar to that adopted in~\cite{lei2018distribution} for mean regression estimators, and it is weaker than requiring consistency in the sense of $L^2$ convergence, by Markov's inequality.

\subsubsection*{Main result}

\begin{theorem}[More precise restatement of Theorem~\ref{thm:oracle-approximation}] \label{thm:supp-oracle-approximation}
For any $\alpha \in (0,1]$, let $\hat{C}^{\mathrm{sc}}_{n,\alpha}(X_{2n+1})$ denote the prediction interval at level $1-\alpha$ for $Y_{2n+1}$ obtained by applying Algorithm~\ref{alg:sc} with $\varepsilon_i= 0$ for all $i \in \{n+1, \ldots,2n+1\}$; that is, we omit the randomization in~\eqref{eq:def-randomized}.
Under Assumptions~\ref{assumption:iid}--\ref{assumption:smoothing}, the prediction interval $\hat{C}^{\mathrm{sc}}_{n,\alpha}(X_{2n+1})$ is asymptotically equivalent, as $n \to \infty$, to $C^{\mathrm{oracle}}_{\alpha}(X_{2n+1})$---the output of the ideal oracle from~\eqref{eq:oracle-interval}--\eqref{eq:oracle-interval-int}. In particular, the following two properties hold.
\begin{enumerate} [label=(\roman*)]
  \item Asymptotic oracle length, in the sense that
\begin{align*}
  \P{|\hat{C}^{\mathrm{sc}}_{n,\alpha}(X_{2n+1})| \leq |C^{\mathrm{oracle}}_{\alpha}(X_{2n+1})| + \gamma_n} \geq 1 - \xi_n,
\end{align*}
where $\gamma_n = 4 C \eta_n + K \left(\epsilon_n + 2 \eta_n^{1/3} \right)  \to 0$, and $\xi_n = \delta_n + 2 n^{-2}  \to 0$.

\item Asymptotic oracle conditional coverage, in the sense that
\begin{align*}
  \P{ \P{Y \in \hat{C}^{\mathrm{sc}}_{n,\alpha}(X_{2n+1}) \mid X_{2n+1}} \geq 1-\alpha - \epsilon_n } \geq 1 - \zeta_n,
\end{align*}
where $\epsilon_n = 2/n + 5 \eta_n^{1/3} + (1 + 2K) \eta_n + 2 \sqrt{(\log n)/n}\to 0$ and $\zeta_n = \eta_n^{1/3} + \eta_n  + 2 n^{-2} \to 0$.
\end{enumerate}
\end{theorem}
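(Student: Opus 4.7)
The plan is to bridge the CHR interval to the oracle through two intermediate objects: the discretized true oracle $\mathcal{S}(x, \pi, \emptyset, \{1,\dots,m_n\}, \tau)$ built from the exact bin probabilities $\pi$, and the ``plug-in'' oracle $\mathcal{S}(x, \hat\pi, \emptyset, \{1,\dots,m_n\}, \tau)$ built from $\hat\pi$. The strategy is to show, on a high-probability event, (a) for each fixed $\tau$ the endpoints of these two discrete intervals are close to those of $C^{\mathrm{oracle}}_{1-\tau}(x)$, and (b) the calibrated index $\hat{t}/T_n$ concentrates around $1-\alpha$. Assumption~\ref{assumption:consistency} supplies the quantitative consistency needed for (a), Assumptions~\ref{assumption:regularity}--\ref{assumption:smoothing} convert mass errors into length/endpoint errors by giving a lower bound on the density and ruling out pathological multi-modal behavior, and Assumption~\ref{assumption:iid} powers the concentration underlying (b).

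First I would establish endpoint stability. For the discretized true distribution, $|\mathcal{S}(x,\pi,\emptyset,\{1,\dots,m_n\},\tau)| = |C^{\mathrm{oracle}}_{1-\tau}(x)| + O(1/m_n) = |C^{\mathrm{oracle}}_{1-\tau}(x)| + O(\eta_n)$, since unimodality (Assumption~\ref{assumption:unimodality}) forces the shortest-mass interval to straddle the true mode, and the bounded density (Assumption~\ref{assumption:regularity}) makes the mapping $\tau \mapsto |C^{\mathrm{oracle}}_{1-\tau}(x)|$ Lipschitz. Next, Assumption~\ref{assumption:consistency} gives $|\hat F(b_j\mid X)-F(b_j\mid X)| \le \eta_n$ uniformly in $j$ on an event of probability at least $1-\eta_n$, so the $\hat\pi$-mass of any contiguous bin range differs from the $\pi$-mass by at most $2\eta_n$. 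Combined with the density lower bound $1/K$ and the preserved unimodality of $\hat\pi$ (Assumption~\ref{assumption:smoothing}), a mass discrepancy of $2\eta_n$ in the shortest-interval optimization translates into an endpoint discrepancy of order $K\eta_n$ (after absorbing the $O(\eta_n)$ discretization), yielding
\begin{equation*}
\bigl| |\mathcal{S}(x,\hat\pi,\emptyset,\{1,\dots,m_n\},\tau)| - |C^{\mathrm{oracle}}_{1-\tau}(x)| \bigr| \le 2C\eta_n + K\eta_n^{1/3},
\end{equation*}
where the $\eta_n^{1/3}$ rate arises from optimally balancing the pointwise mass error against the $m_n$-bin discretization.

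Next I would control the calibration quantile. Each conformity score $E_i/T_n$ is, up to $O(1/T_n) = O(1/n)$ from $T_n = n$ and $O(\eta_n^{1/3})$ from the preceding step, equal to $F(u_i\mid X_i) - F(l_i\mid X_i)$ where $[l_i,u_i] = C^{\mathrm{oracle}}_{\tau^*_i}(X_i)$ is the oracle interval that just barely contains $Y_i$. This quantity is distributed as $\mathrm{Unif}(0,1)$ under the true distribution (by the unimodality and continuity assumptions, so the oracle-interval inclusion event has probability exactly equal to the mass parameter), so the $E_i$ are nearly i.i.d.~uniform on the grid $\{0,1/T_n,\dots,1\}$. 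Applying Dvoretzky--Kiefer--Wolfowitz to the empirical CDF of $\{E_i\}_{i\in \mathcal{D}^{\mathrm{cal}}}$ then yields $|\hat t/T_n - (1-\alpha)| \le \sqrt{(\log n)/n} + O(\eta_n^{1/3})$ with probability at least $1 - 2n^{-2} - O(\eta_n)$. Combining with the endpoint-stability bound gives (i); for (ii), the conditional mass of $S_{\hat t}$ under the true $f$ differs from $\hat t/T_n$ by at most $O(\eta_n)$ (by the consistency-based mass approximation) plus $O(1/m_n)$ (discretization), which collapses into the stated $\epsilon_n$.

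The main obstacle will be the nesting constraints in (\ref{eq:def-s-start})--(\ref{eq:def-s-dec}), which can in principle force $S_{\hat t}$ to be strictly wider than the unconstrained shortest interval $\mathcal{S}(X,\hat\pi,\emptyset,\{1,\dots,m_n\},\tau_{\hat t})$ when $\hat t$ drifts from the starting index $\bar t$. Controlling this slack is where Assumption~\ref{assumption:smoothing} does the real work: because both the true histogram and $\hat\pi$ are unimodal with density bounded by $K$, the unconstrained optima for neighboring values of $\tau$ are nested anyway, up to a shift of at most one bin on either side. A separate subtle point is the absence of randomization in the modified algorithm---the interval $S_{\hat t}$ may overshoot $\tau_{\hat t}$ by at most $\max_j \hat\pi_j \le K/m_n = O(\eta_n)$ worth of mass, which is precisely the residual that one must absorb into the $\gamma_n$ and $\epsilon_n$ rates rather than hope to eliminate.
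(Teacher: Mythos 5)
Your high-level decomposition matches the paper's: first observe that under the unimodality assumptions the nested sequence collapses to the unconstrained minimizer $\hat{S}(X,\hat\pi,\tau)$, then prove near-optimal length by combining an interval-stability lemma with concentration of the calibration quantile from above, and prove near-conditional coverage by combining a coverage lemma with concentration of the calibration quantile from below. Your packaging of the quantile-concentration step as ``the scores are approximately uniform, apply DKW'' is a mild variant of the paper's direct Hoeffding argument (the paper bounds $\mathbb{P}[E_i \le \tau_t]$ near $\tau_t$ and applies Hoeffding at the single level $\tau = 1-\alpha$); your version is slightly heavier but would work, provided you handle the fact that the approximate-uniformity holds only on the good event $X_i \in A_n^{\mathrm c}$, so you still need the paper's split of the calibration set into points with and without accurate conditional CDF estimates.

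The genuine gap is in how you invoke Assumption~\ref{assumption:consistency}. You write that it ``gives $|\hat F(b_j\mid X)-F(b_j\mid X)| \le \eta_n$ uniformly in $j$ on an event of probability at least $1-\eta_n$,'' but the assumption only delivers, for each fixed $j$, a bound on $\mathbb{E}\big[(\hat F(b_j\mid X)-F(b_j\mid X))^2 \mid \mathcal{D}^{\mathrm{train}}\big]$ holding with high probability over $\mathcal{D}^{\mathrm{train}}$. To get a uniform pointwise bound one must first apply Markov's inequality (threshold $t$ gives failure probability $\eta_n^2/t^2$), then a union bound over $m_n \approx \eta_n^{-1}$ bins (total failure probability $\approx \eta_n/t^2$), and then choose $t = \eta_n^{1/3}$ to make both the threshold and the failure probability vanish at the same rate; this is exactly how the paper defines the good event $A_n$ in Lemma~\ref{lemma:An} and where the $\eta_n^{1/3}$ actually comes from. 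Your intermediate reasoning carries a $2\eta_n$ mass discrepancy but then, after ``absorbing the $O(\eta_n)$ discretization,'' lands on a bound with $K\eta_n^{1/3}$, attributed to ``balancing the pointwise mass error against the $m_n$-bin discretization.'' That attribution is wrong: the $\eta_n^{1/3}$ trades off the Markov threshold against the union-bound failure probability, not against the bin width $\approx C\eta_n$. As written, the chain of reasoning does not produce the stated rates; you would need to replace the asserted uniform $\eta_n$-bound with the explicit Markov-plus-union-bound argument to make the $\eta_n^{1/3}$ terms appear for the right reason.
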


\begin{proof}[\textbf{Proof of Theorem~\ref{thm:supp-oracle-approximation}}]

Assumption~\ref{assumption:unimodality} (unimodality) and Assumption~\ref{assumption:smoothing} (smoothness) imply the optimal intervals solving~\eqref{eq:def-opt-problem} for different values of $\tau$ are nested, so we do not need to define the prediction intervals recursively.
More precisely, under Assumptions~\ref{assumption:unimodality} and~\ref{assumption:smoothing}, $\hat{C}^{\mathrm{sc}}_{n,\alpha}(X_{2n+1}) = \hat{S}(X,\hat{\pi},\hat{Q}_{1-\alpha}(E_i))$, where $\hat{S}(X,\hat{\pi},\hat{Q}_{1-\alpha}(E_i))$ is the solution to the optimization problem in~\eqref{eq:def-opt-problem} with $S^- = \emptyset$ and $S^+ = \{1,\ldots,m\}$, while $\hat{Q}_{\tau}(E_i)$ is the $\lceil \tau (n+1) \rceil$ smallest value among $\{E_i\}$ for $i \in \{n+1,\ldots,2n\}$ for any $\tau \in (0,1]$.
The above simplification, combined with the assumed lack of randomization, will simplify our task considerably.

In order to keep the notation consistent, we will refer to $C^{\mathrm{oracle}}_{\tau}(X_{2n+1})$ as the optimal solution $S^{*}(X_{2n+1},f,\tau)$ to the oracle optimization problem in~\eqref{eq:oracle-interval}--\eqref{eq:oracle-interval-int}, where $f$ is the conditional probability density of $Y \mid X$.
Furthermore, without loss of generality, we divide the conformity scores $E_i$ of Algorithm~\ref{alg:sc} by $T$, so that they take values between 0 and 1 and can be directly interpreted as probabilities.

The proof will develop as follows.
\begin{enumerate} [label=(\roman*)]
\item Near-optimal length. First, we will prove in Lemma~\ref{lemma:length} that each interval $\hat{S}(X,\hat{\pi},\tau)$ typically cannot be much wider than the corresponding oracle interval $S^{*}(X,f,\tau + \delta \tau)$, for any fixed $\tau$ and an appropriately small $\delta \tau > 0$. This result will be based on Assumption~\ref{assumption:consistency} (consistency). Then, we will prove in Lemma~\ref{lemma:Q-hat} that $\hat{Q}_{1-\alpha}(\{E_i \}_{i \in \mathcal{D}^{\mathrm{cal}}})$ cannot be much larger than $1-\alpha$; this will be based on Assumption~\ref{assumption:consistency} (consistency) as well as on Assumption~\ref{assumption:iid} (i.i.d.~data), which makes the empirical quantiles to concentrate around their population values. Combining the above two lemmas will allow us to conclude that $\hat{S}(X,\hat{\pi},\hat{\tau})$ cannot typically be much wider than $S^{*}(X,f,1-\alpha)$.
\item Near-conditional length. First, we will prove in Lemma~\ref{lemma:Q-hat-2} that $\hat{Q}_{1-\alpha}(\{E_i \}_{i \in \mathcal{D}^{\mathrm{cal}}})$ cannot be much smaller than $1-\alpha$; again, this relies on the concentration of empirical quantiles due to the i.i.d.~assumption. Then, we will prove in Lemma~\ref{lemma:coverage} that the conditional coverage of $\hat{S}(X,\hat{\pi}, \tau)$ cannot be much smaller than $\tau$, for any fixed $\tau \in (0,1]$; this result relies on the consistency assumption.
Combining the above two lemmas will allow us to conclude that $\hat{S}(X,\hat{\pi},\hat{\tau})$ cannot typically have conditional coverage much smaller than $1-\alpha$.
\end{enumerate}
While Assumptions~\ref{assumption:iid}--\ref{assumption:consistency} will be critical, as previewed above, Assumptions~\ref{assumption:regularity}--\ref{assumption:smoothing} will play a subtler yet important role in connecting the various pieces.

\begin{lemma} \label{lemma:length}
Under Assumptions~\ref{assumption:iid}--\ref{assumption:smoothing},
for any $\tau \in (0,1)$ and $X \indep \mathcal{D}^{\mathrm{train}}$,
\begin{align*}
  \P{|\hat{S}(X,\hat{\pi},\tau)| \leq |S^{*}(X,f,\tau + 2 \eta_n^{1/3})| + 4 C \eta_n} \geq 1 - \delta_n,
\end{align*}
where $\delta_n \defeq \eta_n^{1/3} + \eta_n$.
\end{lemma}

\begin{lemma} \label{lemma:Q-hat}
For any $\tau \in (0,1]$, let $\hat{Q}_{\tau}(E_i)$ denote the $\lceil \tau (n+1) \rceil$ smallest value among the conformity scores $\{E_i\}$ for $i \in \mathcal{D}^{\mathrm{cal}}$, where $n = |\mathcal{D}^{\mathrm{cal}}|$ and
\begin{align*}
  E_i \defeq \min \left\{ \tau_t \in \{0, 1/T_n, \ldots, (T_n-1)/T_n, 1\} : Y_i \in \hat{S}(X_i, \hat{\pi}, \tau_t ) \right\}.
\end{align*}

Then, under Assumptions~\ref{assumption:iid}--\ref{assumption:smoothing}, for any $c>0$,
\begin{align*}
   \P{ \hat{Q}_{\tau}(E_i) \leq \tau + \epsilon_n } \geq 1 - 2 n^{-2c^2},
\end{align*}
where $\epsilon_n \defeq 3/n + 3 \eta_n^{1/3} + \eta_n + 2 c \sqrt{(\log n)/n}$.
\end{lemma}

\textbf{(i) Near-optimal length.}
Define $\delta_n \defeq \eta_n^{1/3} + \eta_n$ as in Lemma~\ref{lemma:length}, and $\epsilon_n \defeq 3/n + 3 \eta_n^{1/3} + \eta_n + 2 c \sqrt{(\log n)/n}$, for any $c>0$, as in Lemma~\ref{lemma:Q-hat}. In the event that $\hat{Q}_{1-\alpha}(E_i) \leq 1-\alpha + \epsilon_n$,
\begin{align*}
  & \P{|\hat{S}(X,\hat{\pi},\hat{Q}_{1-\alpha}(E_i))| \leq |S^{*}(X,f, 1-\alpha + \epsilon_n + 2 \eta_n^{1/3})| + 4C \eta_n} \\
  & \qquad \geq \P{|\hat{S}(X,\hat{\pi},1-\alpha + \epsilon_n)| \leq |S^{*}(X,f, 1-\alpha + \epsilon_n + 2 \eta_n^{1/3})| + 4C \eta_n} \\
  & \qquad \geq 1 - \delta_n,
\end{align*}
where the second inequality follows by applying Lemma~\ref{lemma:length} with $\tau = 1-\alpha + \epsilon_n$.
Further, as Lemma~\ref{lemma:Q-hat} tells us the above event occurs with high probability,
\begin{align*}
   \P{ \hat{Q}_{1-\alpha}(E_i) \leq 1-\alpha + \epsilon_n } \geq 1 - 2 n^{-2c^2},
\end{align*}
in general we have that
\begin{align*}
  & \P{|\hat{S}(X,\hat{\pi},\hat{Q}_{1-\alpha}(E_i))| \leq |S^{*}(X,f, 1-\alpha + \epsilon_n + 2 \eta_n^{1/3})| + 4 C \eta_n}
  \geq 1 - \delta_n - 2 n^{-2c^2}.
\end{align*}
By Assumption~\ref{assumption:regularity}, $f(y \mid x) > 1 / K$ for all $y \in [-C,C]$. This implies $|S^{*}(X,f, \tau)|$ is $K$-Lipschitz as a function of $\tau$. Therefore,
\begin{align*}
  & \P{|\hat{S}(X,\hat{\pi},\hat{Q}_{1-\alpha}(E_i))| \leq |S^{*}(X,f, 1-\alpha)| + 4 C \eta_n + K \left(\epsilon_n + 2 \eta_n^{1/3} \right)} \\
  & \qquad \geq \P{|\hat{S}(X,\hat{\pi},\hat{Q}_{1-\alpha}(E_i))| \leq |S^{*}(X,f, 1-\alpha + \epsilon_n + 2 \eta_n^{1/3})| + 4 C \eta_n} \\
  & \qquad \geq 1 - \delta_n - 2 n^{-2c^2}.
\end{align*}
Hence we have proved that
\begin{align*}
  \P{|\hat{S}(X,\hat{\pi},\hat{Q}_{1-\alpha}(E_i))| \leq |S^{*}(X,f, 1-\alpha)| + \gamma_n} \geq 1 - \xi_n,
\end{align*}
where $\gamma_n = 4 C \eta_n + K (\epsilon_n + 2 \eta_n^{1/3} )$ and $\xi_n = \delta_n + 2 n^{-2c^2}$.
For simplicity, we then set $c=1$.
This completes the proof of (i).

\begin{lemma} \label{lemma:Q-hat-2}
For any $\tau \in (0,1]$, let $\hat{Q}_{\tau}(E_i)$ denote the $\lceil \tau (n+1) \rceil$ smallest value among $\{E_i\}$ for $i \in \mathcal{D}^{\mathrm{cal}}$, where $n = |\mathcal{D}^{\mathrm{cal}}|$ and
\begin{align*}
  E_i \defeq \min \left\{ \tau_t \in \{0, 1/T_n, \ldots, (T_n-1)/T_n, 1\} : Y_i \in \hat{S}(X_i, \hat{\pi}, \tau_t ) \right\}.
\end{align*}
Then, under Assumptions~\ref{assumption:iid}--\ref{assumption:smoothing}, for any $c>0$,
\begin{align*}
   \P{ \hat{Q}_{\tau}(E_i) \geq \tau - \bar{\epsilon}_n } \geq 1 - 2 n^{-2c^2},
\end{align*}
where $\bar{\epsilon}_n \defeq 2/n + 3 \eta_n^{1/3} + (1 + 2 K) \eta_n + 2 c \sqrt{(\log n)/n}$.

\end{lemma}

\begin{lemma} \label{lemma:coverage}
Consider a test point $(X,Y) \indep \mathcal{D}^{\mathrm{train}}, \mathcal{D}^{\mathrm{cal}}$.
$\forall \tau \in (0,1]$, under Assumptions~\ref{assumption:iid}--\ref{assumption:smoothing},
\begin{align*}
  \P{ \P{Y \in \hat{S}(X,\hat{\pi}, \tau) \mid X} \geq \tau - 2 \eta_n^{1/3} } \geq 1 - \eta_n^{1/3} - \eta_n.
\end{align*}
\end{lemma}

\textbf{(ii) Near-conditional coverage.} 
Define $\bar{\epsilon}_n \defeq 2/n + 3 \eta_n^{1/3} + (1 + 2 K) \eta_n + 2 c \sqrt{(\log n)/n}$ as in Lemma~\ref{lemma:Q-hat-2}. Then, focus on the event
\begin{align*}
  \mathcal{E} \defeq \left\{ \hat{Q}_{1-\alpha}(E_i) \geq 1-\alpha - \bar{\epsilon}_n \right\}.
\end{align*}
In this event, for a new test point $(X,Y) \indep \mathcal{D}^{\mathrm{train}}, \mathcal{D}^{\mathrm{cal}}$,
\begin{align*}
  & \P{ \P{Y \in \hat{S}(X,\hat{\pi}, \hat{Q}_{1-\alpha}(E_i)) \mid X} \geq 1 - \alpha - \bar{\epsilon}_n - 2 \eta_n^{1/3} }  \\
  & \qquad \geq \P{ \P{Y \in \hat{S}(X,\hat{\pi}, 1-\alpha-\bar{\epsilon}_n) \mid X} \geq 1 - \alpha - \bar{\epsilon}_n - 2 \eta_n^{1/3} } \\
  & \qquad \geq 1 - \eta_n^{1/3} - \eta_n,
\end{align*}
where the last inequality follows by applying Lemma~\ref{lemma:coverage} with $\tau = 1 - \alpha - \bar{\epsilon}_n$.
Finally, note that Lemma~\ref{lemma:Q-hat-2} says the event $\mathcal{E}$ occurs with probability at least $1-2n^{-2}$, if we choose $c=1$.
Therefore,
\begin{align*}
  \P{ \P{Y \in \hat{S}(X,\hat{\pi}, \hat{Q}_{1-\alpha}(E_i)) \mid X} \geq 1 - \alpha - \bar{\epsilon}_n - 2 \eta_n^{1/3} }
  \geq 1 - \eta_n^{1/3} - \eta_n -2n^{-2}.
\end{align*}

\end{proof}

\subsubsection*{Proofs of technical lemmas}

The proofs of Lemmas~\ref{lemma:length}--\ref{lemma:coverage} will rely on the following additional lemma, which we state here and prove last.

\begin{lemma} \label{lemma:An}
Define the event $A_n$ as
\begin{align*}
  A_n \defeq \left\{ x : \sup_{j \in \{1,\ldots,m_n\}} | \hat{F}(b_j \mid x) - F(b_j \mid x) | > \eta_n^{1/3} \right\}.
\end{align*}
Then, under Assumptions~\ref{assumption:iid}--\ref{assumption:smoothing}, for any $X \indep \mathcal{D}^{\mathrm{train}}$,
\begin{align*}
  \P{X \in A_n}
  & \leq \eta_n^{1/3} + \eta_n.
\end{align*}
Furthermore, partitioning the calibration data points into
\begin{align*}
  & \mathcal{D}^{\mathrm{cal},a} \defeq \{ i \in \{n+1,\ldots,2n\} : X_i \in A_n\},
  & \mathcal{D}^{\mathrm{cal},b} \defeq \{ i \in \{n+1,\ldots,2n\} : X_i \in A_n^{\mathrm{c}}\},
\end{align*}
we have that, for any constant $c>0$,
\begin{align*}
  \P{|\mathcal{D}^{\mathrm{cal},a}| \geq n \left( \eta_n^{1/3} + \eta_n \right) + c\sqrt{n \log n} }
  & \leq n^{-2c^2}.
\end{align*}

\end{lemma}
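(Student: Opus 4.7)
The proof splits cleanly into two parts, both driven by Assumption~\ref{assumption:consistency}, which controls the conditional $L^2$ error of $\hat{F}(b_j \mid \cdot)$ on a high-probability event over $\mathcal{D}^{\mathrm{train}}$. The plan is to first convert this $L^2$ control into a pointwise-in-$X$ tail bound via Markov's inequality combined with union bounds over the $m_n$ bins, and then lift the resulting conditional-on-$\mathcal{D}^{\mathrm{train}}$ statement to a concentration bound on the calibration count via conditional Hoeffding's inequality.

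For the marginal bound on $\P{X \in A_n}$, fix $j \in \{1,\ldots,m_n\}$ and let $E_j$ denote the event (on $\mathcal{D}^{\mathrm{train}}$) that $\E{(\hat{F}(b_j \mid X) - F(b_j \mid X))^2 \mid \mathcal{D}^{\mathrm{train}}} \leq \eta_n^2$. Assumption~\ref{assumption:consistency} gives $\P{E_j} \geq 1 - \eta_n^2$, so a union bound over the $m_n \leq \lfloor \eta_n^{-1} \rfloor$ bins yields $\P{E} \geq 1 - m_n\eta_n^2 \geq 1 - \eta_n$, where $E \defeq \bigcap_j E_j$. On $E$, applying conditional Markov's inequality bin-by-bin gives $\P{|\hat{F}(b_j \mid X) - F(b_j \mid X)| > \eta_n^{1/3} \mid \mathcal{D}^{\mathrm{train}}} \leq \eta_n^2 / \eta_n^{2/3} = \eta_n^{4/3}$, and a second union bound yields $p(\mathcal{D}^{\mathrm{train}}) \defeq \P{X \in A_n \mid \mathcal{D}^{\mathrm{train}}} \leq m_n \eta_n^{4/3} \leq \eta_n^{1/3}$. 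Taking expectations and bounding $p \leq 1$ on $E^c$ (which has probability at most $\eta_n$) yields $\P{X \in A_n} \leq \eta_n^{1/3} + \eta_n$.

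For the concentration bound, I would condition on $\mathcal{D}^{\mathrm{train}}$: since the calibration features are independent of $\mathcal{D}^{\mathrm{train}}$ by the split and $A_n$ is $\mathcal{D}^{\mathrm{train}}$-measurable, the indicators $\{\mathbf{1}[X_i \in A_n]\}_{i \in \mathcal{D}^{\mathrm{cal}}}$ are conditionally iid Bernoulli with parameter $p(\mathcal{D}^{\mathrm{train}})$. Hoeffding's inequality with $t = c\sqrt{n\log n}$ then gives
\[
\P{|\mathcal{D}^{\mathrm{cal},a}| \geq n\, p(\mathcal{D}^{\mathrm{train}}) + c\sqrt{n \log n} \mid \mathcal{D}^{\mathrm{train}}} \leq e^{-2c^2 \log n} = n^{-2c^2},
\]
and since this bound is uniform in $\mathcal{D}^{\mathrm{train}}$ it holds unconditionally as well. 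It remains to replace the random threshold $n\, p(\mathcal{D}^{\mathrm{train}})$ by the deterministic quantity $n(\eta_n^{1/3} + \eta_n)$: on $E$ this is immediate from $p(\mathcal{D}^{\mathrm{train}}) \leq \eta_n^{1/3}$, while off $E$ the slack $n\eta_n$ in the threshold, paired with the trivial bound $|\mathcal{D}^{\mathrm{cal},a}| \leq n$ and $\P{E^c} \leq \eta_n$, absorbs the contribution from the bad training event.

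The main obstacle is exactly this last splicing step: Hoeffding is naturally a conditional-on-$\mathcal{D}^{\mathrm{train}}$ statement with a random centering $n\, p(\mathcal{D}^{\mathrm{train}})$, so one must carry this random threshold through the argument and then invoke the Part~1 event $E$ to replace it by the deterministic bound $n(\eta_n^{1/3}+\eta_n)$. Everything else (the two union bounds over bins, the conditional Markov step, the conditional Hoeffding) is routine; the interplay between the two independent sources of randomness, $\mathcal{D}^{\mathrm{train}}$ and $\mathcal{D}^{\mathrm{cal}}$, is the only delicate point.
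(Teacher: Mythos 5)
Your Part~1 argument (bound on $\P{X \in A_n}$) is essentially the paper's: two union bounds over the $m_n$ bins sandwiching a Markov step, combined with Assumption~\ref{assumption:consistency} and $m_n \leq \eta_n^{-1}$. The only cosmetic difference is that you package the training-side bad event as $E = \bigcap_j E_j$ and split; the paper does the whole thing in one chain of inequalities that implicitly performs the same split when passing from the tower of expectations to $\eta_n^{-2/3}\eta_n^2 + \eta_n^2$.

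Your Part~2, however, departs from the paper in a way worth noting. The paper applies Hoeffding's inequality directly to $\sum_{i=n+1}^{2n} \mathbf{1}[X_i \in A_n]$, treating these as unconditionally i.i.d.\ with mean $\P{X \in A_n}$. But $A_n$ is a function of $\mathcal{D}^{\mathrm{train}}$, so the indicators are only \emph{conditionally} i.i.d.\ given $\mathcal{D}^{\mathrm{train}}$, with random conditional mean $p(\mathcal{D}^{\mathrm{train}})$; unconditionally they are exchangeable but not independent. You are right to flag this as the delicate point and to use conditional Hoeffding with the random center $n\,p(\mathcal{D}^{\mathrm{train}})$ --- that is the more rigorous route.

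But your final splicing step does not work. On $E^{\mathrm{c}}$ you have no control on $p(\mathcal{D}^{\mathrm{train}})$ --- it could be arbitrarily close to $1$ --- so the slack $n\eta_n$ in the deterministic threshold does not ``absorb'' anything: $|\mathcal{D}^{\mathrm{cal},a}|$ can be close to $n$, which exceeds $n(\eta_n^{1/3}+\eta_n) + c\sqrt{n\log n}$ for large $n$, with conditional probability near $1$. Threshold slack and probability mass of a bad event are not interchangeable in this way. The honest bound your decomposition produces is
\begin{align*}
\P{|\mathcal{D}^{\mathrm{cal},a}| \geq n(\eta_n^{1/3}+\eta_n) + c\sqrt{n\log n}}
  & \leq \P{\ldots,\; E} + \P{E^{\mathrm{c}}}
    \leq n^{-2c^2} + \eta_n,
\end{align*}
not $n^{-2c^2}$. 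You should state this weaker bound rather than claim the $\eta_n$ vanishes; the extra $\eta_n$ is harmless downstream (the theorem's $\xi_n$ and $\zeta_n$ already carry $\eta_n$-sized terms), but it is not removable by the argument you sketch. In fact the paper's own application of unconditional Hoeffding glosses over exactly the dependence you noticed, which is why it arrives at the cleaner $n^{-2c^2}$; your conditional route is more careful, but then it must also be more honest about what it delivers.
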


\begin{proof}[\textbf{Proof of Lemma~\ref{lemma:length}}]

Consider the event $A_n$ defined in Lemma~\ref{lemma:An}, 
\begin{align*}
  A_n \defeq \left\{ x : \sup_{j \in \{1,\ldots,m_n\}} | \hat{F}(b_j \mid x) - F(b_j \mid x) | > \eta_n^{1/3} \right\},
\end{align*}
and let us restrict our attention to the case in which $X$ belongs to the complement of $A_n$.

Omitting the explicit dependence on $X$ and $\hat{\pi}$, we can write $\hat{S}(X,\hat{\pi},\tau) = [\hat{j}_1, \hat{j}_2]$, for some $\hat{j}_1, \hat{j}_2 \in \{1,\ldots,m_n\}$ such that $\hat{F}(b_{\hat{j}_2}) - \hat{F}(b_{\hat{j}_1-1}) \geq \tau$. Because we are assuming $X$ belongs to the complement of $A_n$, the triangle inequality implies
$F(b_{\hat{j}_2}) - F(b_{\hat{j}_1-1}) \geq \tau - 2 \eta_n^{1/3}$.
Consider now the oracle interval $S^{*}(X,f,\tau + 2 \eta_n^{1/3})$, which we can write in short as $[l^{*},u^{*}]$, for some $l^{*},u^{*} \in \mathbb{R}$ such that $F(u^{*})-F(l^{*}) \geq \tau + 2 \eta_n^{1/3}$.
Define now $j'_1,j'_2 \in \{1,\ldots,m_n\}$ as the indices of the discretized bins immediately below and above $l^{*},u^{*}$, respectively; precisely,
\begin{align*}
  j'_1 & \defeq \max \{ j \in \{1,\ldots,m_n\} : b_j < l^*\}, \\
  j'_2 & \defeq \min \{ j \in \{1,\ldots,m_n\} : b_j > u^*\}.
\end{align*}
This definition implies 
$$b_{j'_2}-b_{j'_1} \leq u^*-l^* + 4C/m_n,$$
as each bin has width $2C/m_n$. Furthermore, 
\begin{align*}
  \hat{F}(b_{j'_2}) - \hat{F}(b_{j'_1})
   & \geq \hat{F}(u^*) - \hat{F}(l^*) \\
   & \geq F(u^*) - F(l^*) - 2 \eta_n^{1/3} \\
  & \geq \tau.
\end{align*}
Above, the first inequality follows from the fact that $j'_1 < l^*$ and $j'_2 > u^*$, the second inequality follows from the assumption that $X$ belongs to the complement of $A_n$, and the third inequality follows directly from the definition of the oracle.
The result implies that $[j'_1,j'_2]$ would be a feasible solution for the discrete optimization problem solved by $\hat{S}(X,\hat{\pi},\tau)$; therefore, it must be the case that $\hat{j}_2 - \hat{j}_1 \leq j'_2 - j'_1$ because $\hat{j}_2 - \hat{j}_1$ is minimal among all feasible solutions to this problem. Therefore, we can conclude that, if $X$ belongs to the complement of $A_n$, then
\begin{align*}
  |\hat{S}(X,\hat{\pi},\tau)| 
  & = b_{\hat{j}_2}-b_{\hat{j}_1}
   \leq b_{j'_2}-b_{j'_1} \\
  & \leq |S^{*}(X,f,\tau + 2 \eta_n^{1/3})| + 4C/m_n.
\end{align*}
Finally, the proof is complete by applying Lemma~\ref{lemma:An}.

\end{proof}

\begin{proof}[\textbf{Proof of Lemma~\ref{lemma:Q-hat}}]

Take any $i \in \mathcal{D}^{\mathrm{cal},b}$, where $\mathcal{D}^{\mathrm{cal},b}$ is defined as in Lemma~\ref{lemma:An}:
\begin{align*}
  & \mathcal{D}^{\mathrm{cal},b} \defeq \{ i \in \{n+1,\ldots,2n\} : X_i \in A_n^{\mathrm{c}}\},
\end{align*}
where
\begin{align*}
  A_n \defeq \left\{ x : \sup_{j \in \{1,\ldots,m_n\}} | \hat{F}(b_j \mid x) - F(b_j \mid x) | > \eta_n^{1/3} \right\}.
\end{align*}

For any fixed $t \in \{0,\ldots,T_n\}$ and $\tau_t = t/T_n$, omitting the explicit dependence on $X$ and $\hat{\pi}$, we can write $\hat{S}(X,\hat{\pi},\tau_t) = [\hat{j}_1, \hat{j}_2]$, for some $\hat{j}_1, \hat{j}_2 \in \{1,\ldots,m_n\}$ such that $\hat{F}(b_{\hat{j}_2}) - \hat{F}(b_{\hat{j}_1-1}) \geq \tau_t$. 
Then, note that
\begin{align*}
  \P{E_i \leq \tau_t}
  & = \P{ Y_i \in \hat{S}(X_i, \hat{\pi}, \tau_t) } \\
  & = F(b_{\hat{j}_2}) - F(b_{\hat{j}_1-1}) \\
  & \geq \hat{F}(b_{\hat{j}_2}) - \hat{F}(b_{\hat{j}_1-1}) - 2 \eta_n^{1/3} \\
  & \geq \tau_t - 2 \eta_n^{1/3}.
\end{align*}
Above, the first inequality follows from the definition of $\mathcal{D}^{\mathrm{cal},b}$.
Equivalently, we can rewrite this as
\begin{align*}
  \P{E_i > \tau_t + 2 \eta_n^{1/3} + \delta_n } \leq 1 - \tau_t - \delta_n,
\end{align*}
for any $\delta_n > 0$.
Now, partition $\mathcal{D}^{\mathrm{cal},b}$ into the following two disjoint subsets:
\begin{align*}
  \mathcal{D}^{\mathrm{cal}, b1} & \defeq \{ i \in \mathcal{D}^{\mathrm{cal},b} : E_i \leq \tau_t + 2 \eta_n^{1/3} + \delta_n\}, \\
  \mathcal{D}^{\mathrm{cal}, b2} & \defeq \{ i \in \mathcal{D}^{\mathrm{cal},b} : E_i > \tau_t + 2 \eta_n^{1/3} + \delta_n\}.
\end{align*}
As in the proof of Lemma~\ref{lemma:An}, we bound $|\mathcal{D}^{\mathrm{cal},b2}|$ with Hoeffding's inequality. 
For any $i \in \mathcal{D}^{\mathrm{cal}}$, define $\tilde{E}_i = E_i$ if $i \in \mathcal{D}^{\mathrm{cal},b}$ and $E_i = \tau_t$ otherwise.
For any $\epsilon>0$,
\begin{align*}
  & \P{|\mathcal{D}^{\mathrm{cal}, b2}| \geq n (1-\tau_t - \delta_n ) + \epsilon } \\
  & \qquad \leq \P{\frac{1}{n} \sum_{i \in \mathcal{D}^{\mathrm{cal}, b}} \I{\tilde{E}_i > \tau_t + 2 \eta_n^{1/3} + \delta_n}  \geq \P{E_i > \tau_t + 2 \eta_n^{1/3} + \delta_n} + \frac{\epsilon}{n} } \\
  & \qquad = \P{\frac{1}{n} \sum_{i=1}^{n} \I{\tilde{E}_i > \tau_t + 2 \eta_n^{1/3} + \delta_n}  \geq \P{E_i > \tau_t + 2 \eta_n^{1/3} + \delta_n} + \frac{\epsilon}{n} } \\
  & \qquad \leq \P{\frac{1}{n} \sum_{i=1}^{n} \I{\tilde{E}_i > \tau_t + 2 \eta_n^{1/3} + \delta_n}  \geq \P{\tilde{E}_i > \tau_t + 2 \eta_n^{1/3} + \delta_n} + \frac{\epsilon}{n} } \\
  & \qquad \leq \exp \left( -\frac{2\epsilon^2}{n} \right).
\end{align*}
Therefore, setting $\epsilon = c\sqrt{n \log n}$, for some constant $c>0$, yields 
\begin{align*}
  \P{|\mathcal{D}^{\mathrm{cal}, b2}| \geq n (1-\tau_t - \delta_n) + c\sqrt{n \log n} } 
  & \leq n^{-2c^2}.
\end{align*}
As $|\mathcal{D}^{\mathrm{cal}, b1}| = n -|\mathcal{D}^{\mathrm{cal}, a}| - |\mathcal{D}^{\mathrm{cal}, b2}|$, combining the above result with that of Lemma~\ref{lemma:An} yields:
\begin{align*}
  \P{|\mathcal{D}^{\mathrm{cal}, b1}| \geq n \tau_t + n \delta_n - n\left( \eta_n^{1/3} + \eta_n \right) - 2 c \sqrt{n \log n}  } 
  & \geq 1 - 2 n^{-2c^2}.
\end{align*}
If we choose $\delta_n = \tau_t/n + \left( \eta_n^{1/3} + \eta_n \right) + 2 c \sqrt{(\log n)/n}$, this becomes
\begin{align*}
  \P{|\mathcal{D}^{\mathrm{cal}, b1}| \geq \tau_t (n+1)  } 
  & \geq 1 - 2 n^{-2c^2},
\end{align*}
which means
\begin{align*}
   \P{ \hat{Q}_{\tau_t}(E_i) \leq \tau_t + \tau_t/n + 3 \eta_n^{1/3} + \eta_n + 2 c \sqrt{(\log n)/n} } \geq 1 - 2 n^{-2c^2}.
\end{align*}
Now, consider any continuous $\tau \in (0,1]$, and let $t' = \min \{ t \in \{0,\ldots,T_n\} : \tau_t \geq \tau\}$. As $\tau_{t'} \geq \tau$, we know $\hat{Q}_{\tau_{t'}}(E_i) \geq \hat{Q}_{\tau}(E_i)$. Therefore, 
\begin{align*}
   & \P{ \hat{Q}_{\tau}(E_i) \leq \tau_{t'} + \tau_{t'}/n + 3 \eta_n^{1/3} + \eta_n + 2 c \sqrt{(\log n)/n} } \\
  & \qquad \geq \P{ \hat{Q}_{\tau_{t'}}(E_i) \leq \tau_{t'} + \tau_{t'}/n + 3 \eta_n^{1/3} + \eta_n + 2 c \sqrt{(\log n)/n} } \\
  & \qquad \geq 1 - 2 n^{-2c^2}.
\end{align*}
However, as $T_n = n$, we also have that $\tau_{t'} \leq \tau + 1/n$. Therefore,
\begin{align*}
   & \P{ \hat{Q}_{\tau}(E_i) \leq \tau + 1/n + \tau/n + 1/n^2 + 3 \eta_n^{1/3} + \eta_n + 2 c \sqrt{(\log n)/n} } \\
  & \qquad \geq 1 - 2 n^{-2c^2}.
\end{align*}
Finally, we simplify by replacing $1/n + \tau/n + 1/n^2$ with $3/n$, which preserves the inequality.

\end{proof}

\begin{proof}[\textbf{Proof of Lemma~\ref{lemma:Q-hat-2}}]

The proof is similar to that of the analogous upper bound in Lemma~\ref{lemma:Q-hat}.
Take any $i \in \mathcal{D}^{\mathrm{cal},b}$, where $\mathcal{D}^{\mathrm{cal},b}$ is defined as in Lemma~\ref{lemma:An}:
\begin{align*}
  & \mathcal{D}^{\mathrm{cal},b} \defeq \{ i \in \{n+1,\ldots,2n\} : X_i \in A_n^{\mathrm{c}}\},
\end{align*}
with
\begin{align*}
  A_n \defeq \left\{ x : \sup_{j \in \{1,\ldots,m_n\}} | \hat{F}(b_j \mid x) - F(b_j \mid x) | > \eta_n^{1/3} \right\}.
\end{align*}
For any $t \in \{0,\ldots,T_n\}$ and $\tau_t = t/T_n$, omitting the explicit dependence on $X$ and $\hat{\pi}$, we can write $\hat{S}(X,\hat{\pi},\tau_t) = [\hat{j}_1, \hat{j}_2]$, for some $\hat{j}_1, \hat{j}_2 \in \{1,\ldots,m_n\}$ such that $\hat{F}(b_{\hat{j}_2}) - \hat{F}(b_{\hat{j}_1-1}) \geq \tau_t$. 
Then,
\begin{align*}
  \P{E_i \leq \tau_t}
  & = \P{ Y_i \in \hat{S}(X_i, \hat{\pi}, \tau_t) } \\
  & = F(b_{\hat{j}_2}) - F(b_{\hat{j}_1-1}) \\
  & \leq \hat{F}(b_{\hat{j}_2}) - \hat{F}(b_{\hat{j}_1-1}) + 2 \eta_n^{1/3} \\
  & \leq \tau_t + 2 K \eta_n + 2 \eta_n^{1/3}.
\end{align*}
Above, the first inequality follows directly from the definition of $\mathcal{D}^{\mathrm{cal},b}$.
The second inequality follows from the observation that $\hat{S}(X_i, \hat{\pi}, \tau_t)$ could not be optimal if $\hat{F}(b_{\hat{j}_2}) - \hat{F}(b_{\hat{j}_1-1}) \geq \tau_t + 2 K \eta_n$ because it would be possible to obtain a shorter feasible interval by removing either the leftmost or the rightmost bin. In fact, each bin $j$ carries an estimated mass $\hat{\pi}_j \leq K \eta_n$, and $\hat{\pi}$ is assumed to be unimodal.
Fix any $\delta_n > 0$, and let us rewrite the above result as
\begin{align*}
  \P{E_i \leq \tau_t - 2 K \eta_n - 2 \eta_n^{1/3} - \delta_n } \leq \tau_t + \delta_n.
\end{align*}

Now, partition $\mathcal{D}^{\mathrm{cal},b}$ into the following two disjoint subsets:
\begin{align*}
  \mathcal{D}^{\mathrm{cal}, b1} & \defeq \{ i \in \mathcal{D}^{\mathrm{cal},b} : E_i \leq \tau - 2 K \eta_n - 2 \eta_n^{1/3} - \delta_n \}, \\
  \mathcal{D}^{\mathrm{cal}, b2} & \defeq \{ i \in \mathcal{D}^{\mathrm{cal},b} : E_i > \tau - 2 K \eta_n - 2 \eta_n^{1/3} - \delta_n \}.
\end{align*}
As in the proof of Lemma~\ref{lemma:Q-hat}, we will bound $|\mathcal{D}^{\mathrm{cal},b2}|$ with Hoeffding's inequality. 
For any $i \in \mathcal{D}^{\mathrm{cal}}$, define $\tilde{E}_i = E_i$ if $i \in \mathcal{D}^{\mathrm{cal},b}$ and $E_i = \tau_t$ otherwise.
For any $\epsilon>0$,
\begin{align*}
  & \P{|\mathcal{D}^{\mathrm{cal}, b1}| \geq n (1-\tau_t - \delta_n ) + \epsilon } \\
  & \qquad \leq \P{\frac{1}{n} \sum_{i \in \mathcal{D}^{\mathrm{cal}, b}} \I{\tilde{E}_i \leq \tau_t - 2 K \eta_n -  2 \eta_n^{1/3} - \delta_n}  \geq \P{E_i \leq \tau_t - 2 K \eta_n -  2 \eta_n^{1/3} - \delta_n} + \frac{\epsilon}{n} } \\
  & \qquad = \P{\frac{1}{n} \sum_{i=1}^{n} \I{\tilde{E}_i \leq \tau_t - 2 K \eta_n -  2 \eta_n^{1/3} - \delta_n}  \geq \P{E_i \leq \tau_t - 2 K \eta_n -  2 \eta_n^{1/3} - \delta_n} + \frac{\epsilon}{n} } \\
  & \qquad \leq \P{\frac{1}{n} \sum_{i=1}^{n} \I{\tilde{E}_i \leq \tau_t - 2 K \eta_n -  2 \eta_n^{1/3} - \delta_n}  \geq \P{\tilde{E}_i \leq \tau_t - 2 K \eta_n -  2 \eta_n^{1/3} - \delta_n} + \frac{\epsilon}{n} } \\
  & \qquad \leq \exp \left( -\frac{2\epsilon^2}{n} \right).
\end{align*}
Therefore, setting $\epsilon = c\sqrt{n \log n}$, for some constant $c>0$, yields 
\begin{align*}
  \P{|\mathcal{D}^{\mathrm{cal}, b1}| \geq n (1-\tau_t - \delta_n) + c\sqrt{n \log n} } 
  & \leq n^{-2c^2}.
\end{align*}
As $|\mathcal{D}^{\mathrm{cal}, b2}| = n -|\mathcal{D}^{\mathrm{cal}, a}| - |\mathcal{D}^{\mathrm{cal}, b1}|$, combining the above result with that of Lemma~\ref{lemma:An} yields:
\begin{align*}
  \P{|\mathcal{D}^{\mathrm{cal}, b2}| \geq n \tau_t + n \delta_n - n\left( \eta_n^{1/3} + \eta_n \right) - 2 c \sqrt{n \log n}  } 
  & \geq 1 - 2 n^{-2c^2}.
\end{align*}
If we choose $\delta_n = \tau_t/n + \left( \eta_n^{1/3} + \eta_n \right) + 2 c \sqrt{(\log n)/n}$, this becomes
\begin{align*}
  \P{|\mathcal{D}^{\mathrm{cal}, b2}| \geq \tau_t (n+1)  } 
  & \geq 1 - 2 n^{-2c^2},
\end{align*}
which means
\begin{align*}
   \P{ \hat{Q}_{\tau_t}(E_i) \geq \tau_t - \tau_t/n - 3 \eta_n^{1/3} - (1+2 K) \eta_n - 2 c \sqrt{(\log n)/n} } \geq 1 - 2 n^{-2c^2}.
\end{align*}

Now, consider any continuous $\tau \in (0,1]$, and let $t' = \max \{ t \in \{0,\ldots,T_n\} : \tau_t \leq \tau\}$. As $\tau \geq \tau_{t'}$, we know $\hat{Q}_{\tau}(E_i) \geq \hat{Q}_{\tau_{t'}}(E_i)$. Therefore, 
\begin{align*}
   & \P{ \hat{Q}_{\tau}(E_i) \geq \tau_{t'} - \tau_{t'}/n - 3 \eta_n^{1/3} - (1+2K) \eta_n - 2 c \sqrt{(\log n)/n} } \\
  & \qquad \geq    \P{ \hat{Q}_{\tau_t}(E_i) \geq \tau_t - \tau_t/n - 3 \eta_n^{1/3} - (1+2 K) \eta_n - 2 c \sqrt{(\log n)/n} } \\
  & \qquad \geq 1 - 2 n^{-2c^2}.
\end{align*}
However, as $T_n = n$, we also have that $\tau_{t'} \geq \tau - 1/n$. Therefore,
\begin{align*}
  1 - 2 n^{-2c^2}
  & \leq \P{ \hat{Q}_{\tau}(E_i) \geq (\tau-1/n) (1-1/n) - 3 \eta_n^{1/3} - (1+2K) \eta_n - 2 c \sqrt{(\log n)/n} } \\
  & \leq \P{ \hat{Q}_{\tau}(E_i) \geq \tau-\tau/n  - 1/n + 1/n^2 - 3 \eta_n^{1/3} - (1+2K) \eta_n - 2 c \sqrt{(\log n)/n} } \\
\end{align*}
Finally, we simplify by replacing $-1/n - \tau/n + 1/n^2$ with $-2/n$, which preserves the inequality.

\end{proof}

\begin{proof}[\textbf{Proof of Lemma~\ref{lemma:coverage}}]
Let us begin by conditioning on $X=x$, assuming $x \in A_n^{\mathrm{c}}$, where $A_n$ is defined as in Lemma~\ref{lemma:An}:
\begin{align*}
  A_n \defeq \left\{ x : \sup_{j \in \{1,\ldots,m_n\}} | \hat{F}(b_j \mid x) - F(b_j \mid x) | > \eta_n^{1/3} \right\}.
\end{align*}

Omitting the explicit dependence on $x$ and $\hat{\pi}$, we can write $\hat{S}(x,\hat{\pi},\tau) = [\hat{j}_1, \hat{j}_2]$, for some $\hat{j}_1, \hat{j}_2 \in \{1,\ldots,m_n\}$ such that $\hat{F}(b_{\hat{j}_2}) - \hat{F}(b_{\hat{j}_1-1}) \geq \tau$. 
\begin{align*}
  \P{Y \in \hat{S}(x,\hat{\pi}, \tau) \mid X=x}
  & = F(b_{\hat{j}_2}) - F(b_{\hat{j}_1-1}) \\
  & \geq \hat{F}(b_{\hat{j}_2}) - \hat{F}(b_{\hat{j}_1-1}) - 2 \eta_n^{1/3} \\
  & \geq \tau - 2 \eta_n^{1/3},
\end{align*}
where the second inequality follows from the definition of the complement of $A_n$.
Finally, we know from Lemma~\ref{lemma:An} that $\P{X \in A_n^{\mathrm{c}}} \geq 1-\eta_n^{1/3} - \eta_n$.
\end{proof}

\begin{proof}[\textbf{Proof of Lemma~\ref{lemma:An}}]
The first part of this result follows from the definition of $A_n$ by a union bound. For any fixed $j \in \{1,\ldots,m_n\}$,
\begin{align*}
  \P{X \in A_n}
  & = \P{\sup_{j' \in \{1,\ldots,m_n\}} | \hat{F}(b_{j'} \mid x) - F(b_{j'} \mid x) |^2 > \eta_n^{2/3}} \\
  & \leq m_n \, \P{| \hat{F}(b_j \mid x) - F(b_j \mid x) |^2 > \eta_n^{2/3} } \\
  & \leq m_n \left( \eta_n^{-2/3} \E{ \E{| \hat{F}(b_j \mid x) - F(b_j \mid x) |^2 \mid \mathcal{D}^{\mathrm{train}}} } \right) \\
  & \leq m_n \left( \eta_n^{-2/3} \eta_n^2 + \eta_n^2 \right) \\
  & \leq m_n \left( \eta_n^{4/3} + \eta_n^2 \right) \\
  & \leq \eta_n^{1/3} + \eta_n.
\end{align*}
The second inequality above is Markov's inequality, while the third inequality follows directly from Assumption~\ref{assumption:consistency}. The last inequality is a consequence of $m_n = \lfloor \eta_n^{-1} \rfloor$, also from Assumption~\ref{assumption:consistency}.

The second part of this result follows from Hoeffding's inequality. 
As we know from the above that $\P{X \in A_n} \leq \eta_n^{1/3} + \eta_n$, for any $\epsilon>0$,
\begin{align*}
  \P{|\mathcal{D}^{\mathrm{cal},a}| \geq n \left( \eta_n^{1/3} + \eta_n \right) + \epsilon }
  & \leq \P{|\mathcal{D}^{\mathrm{cal},a}| \geq n \P{X \in A_n} + \epsilon } \\
  & \leq \P{\frac{1}{n} \sum_{i=n+1}^{2n} \I{X_i \in A_n}  \geq \P{X_i \in A_n} + \frac{\epsilon}{n} } \\
  & \leq \exp \left( -\frac{2\epsilon^2}{n} \right).
\end{align*}
Therefore, setting $\epsilon = c\sqrt{n \log n}$, for some constant $c>0$, yields 
\begin{align*}
  \P{|\mathcal{D}^{\mathrm{cal},a}| \geq n \left( \eta_n^{1/3} + \eta_n \right) + c\sqrt{n \log n} }
  & \leq n^{-2c^2}.
\end{align*}
\end{proof}

\section{Numerical experiments} \label{sec:supp-data}

This section provides additional details about the numerical experiments with synthetic and real data.

\subsection{Base machine learning models}

We estimate the distribution of $Y \mid X$ using the following quantile regression models.
\begin{itemize}
    \item \textbf{Deep neural network}. The network is composed of three fully connected layers with a hidden dimension of 64, and ReLU activation functions. We use the pinball loss~\cite{taylor2000quantile} to estimate the conditional quantiles, with a dropout regularization of rate 0.1. The network is optimized using Adam~\cite{kingma2014adam} with a learning rate equal to 0.0005. We tune the optimal number of epochs by cross validation, minimizing the loss function on the hold-out data points; the maximal number of epochs is set to 2000.
    
    \item \textbf{Random forest}. We use the Python \texttt{Scikit-garden} implementation of quantile regression forests~\cite{meinshausen2006quantile}. We adopt the default hyper-parameters, except for the minimum number of samples required to split an internal node, which we set to 50, and the total number of trees, which we fix equal to 100.
\end{itemize}

Our numerical experiments were conducted on Xeon-2640 CPUs in a computing cluster. Each data set was analyzed using a single core and less than 5 GB of memory; the longest job took less than 12 hours. The computational cost of the novel part of CHR is negligible: the majority of the computing resources were dedicated to training the base models. 

\subsection{Additional experiments with synthetic data} \label{sec:supp-exp}

\begin{figure}[!htb]
  \centering
    \includegraphics[width=\textwidth]{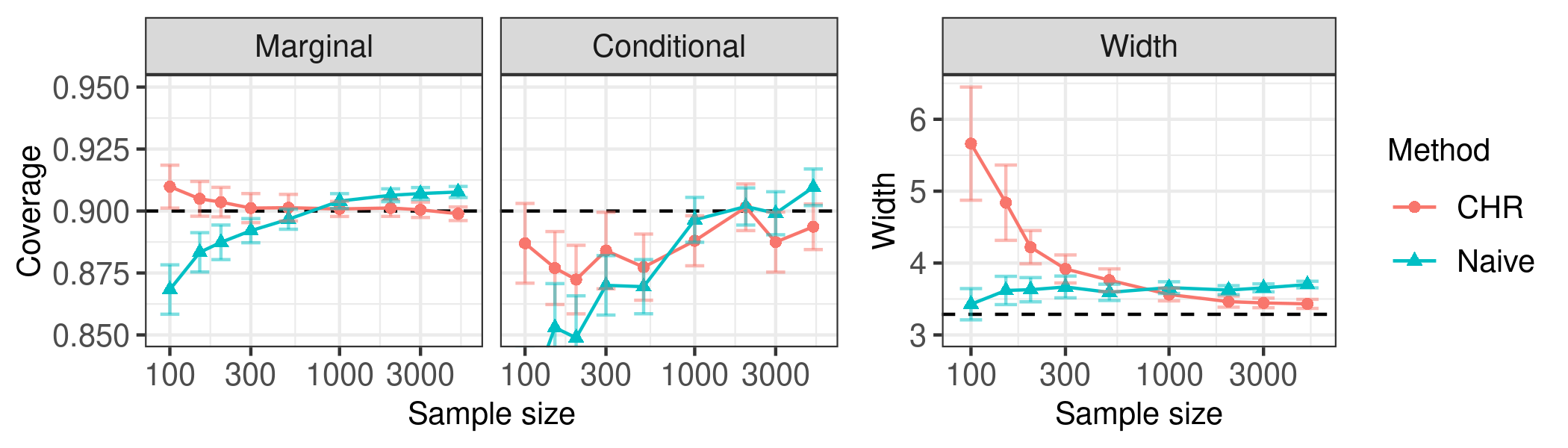}
\caption{Performance of our method (CHR) compared to that of naive uncalibrated prediction intervals based on the same deep neural network regression model, in the experiments of Figure~\ref{fig:research-example-1-boxes}.}
  \label{fig:research-example-1-boxes-naive}
\end{figure}

\subsection{Additional experiments with real data}

In Section~\ref{sec:real-data} of the main article, we compared the performance of our method to that of several benchmarks using a deep neural network base model. Figure~\ref{fig:results_real_rf} provides additional comparisons using a random forest base model. The bottom panel of this figure shows the average interval length. Our method (CHR) significantly outperforms all benchmarks by this metric, as it consistently constructs shorter intervals. The top panel of Figure~\ref{fig:results_real_rf} compares these alternative methods in terms of their worst-slab conditional coverage~\cite{cauchois2020knowing}, which we estimate as in~\cite{romano2020classification}. All methods achieve high conditional coverage on most data sets, except for CHR which tends to slightly undercover in the case of the two Facebook data sets (fb1 and fb2). Lastly, we note that all methods achieve exact 90\% marginal coverage, as guaranteed theoretically; see Table~\ref{tab:results_real} for additional performance details.

\begin{figure}[!htb]
  \centering
    \includegraphics[width=\textwidth]{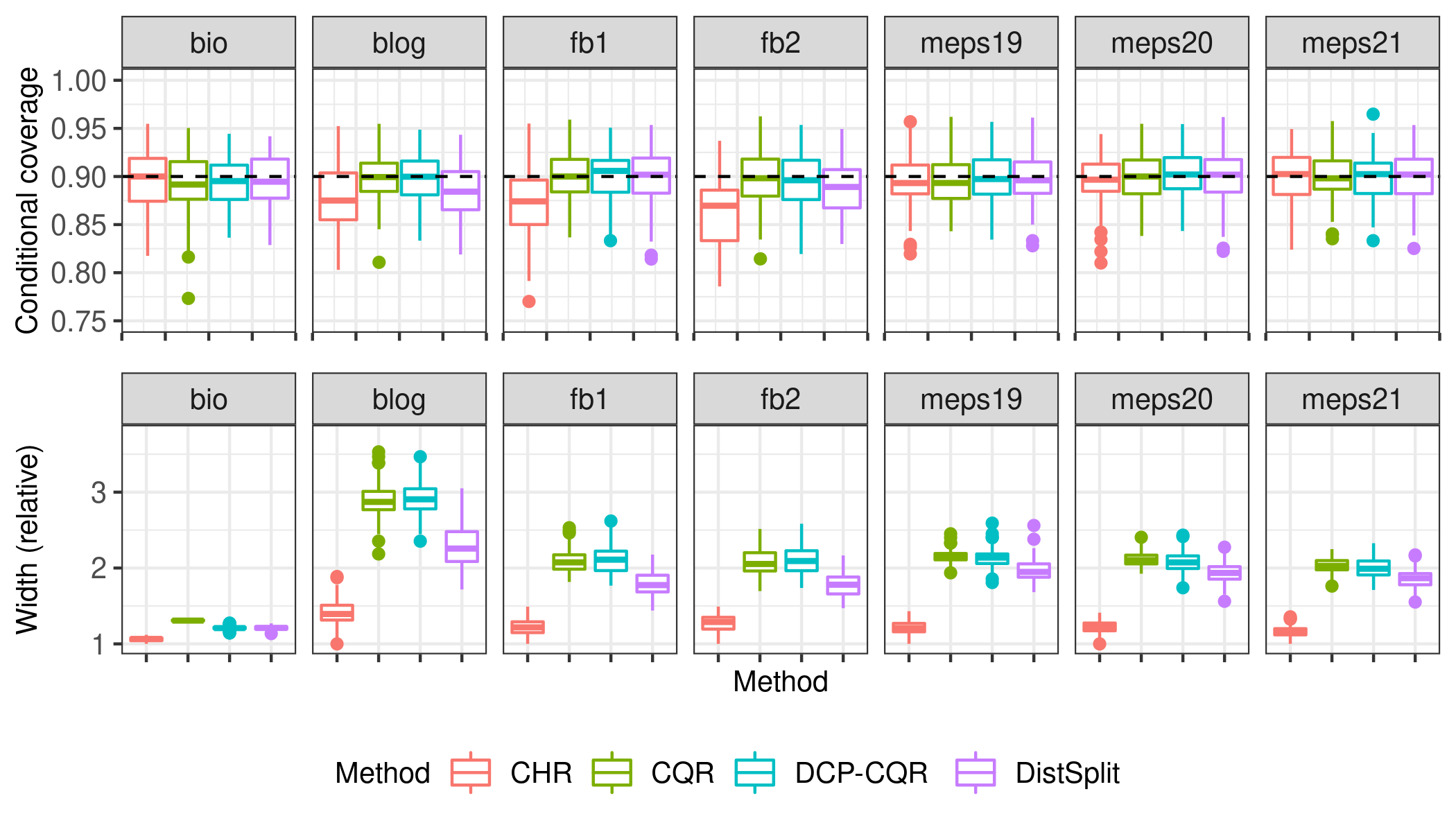}
    \caption{Performance of our method and benchmarks on several real data sets, using a random forest base model. Other details are as in Figure~\ref{fig:results_real_nnet}.} 
  \label{fig:results_real_rf}
\end{figure}

\begin{table}
\centering
\caption{Performance of our method and benchmarks on several real data sets, using either a deep neural network or a random forest base model. The numerical values indicate values averaged over 100 random test sets (standard deviations are in parenthesis). Other details are as in Figures~\ref{fig:results_real_nnet} and~\ref{fig:results_real_rf}.}
  \label{tab:results_real}
{\footnotesize
  
\begin{tabular}[t]{cccccccc}
\toprule
\multicolumn{2}{c}{ } & \multicolumn{3}{c}{Neural Network} & \multicolumn{3}{c}{Random Forest} \\
\cmidrule(l{3pt}r{3pt}){3-5} \cmidrule(l{3pt}r{3pt}){6-8}
\multicolumn{2}{c}{ } & \multicolumn{2}{c}{Coverage} & \multicolumn{1}{c}{ } & \multicolumn{2}{c}{Coverage} & \multicolumn{1}{c}{ } \\
\cmidrule(l{3pt}r{3pt}){3-4} \cmidrule(l{3pt}r{3pt}){6-7}
Data & Method & Marginal & Condit. & Width & Marginal & Condit. & Width\\
\midrule
 & CHR & 0.90 (0.01) & 0.88 (0.03) & 13.1 (0.3) & 0.90 (0.01) & 0.90 (0.03) & 10.4 (0.3)\\

 & CQR & 0.90 (0.01) & 0.88 (0.03) & 14.5 (0.2) & 0.90 (0.01) & 0.89 (0.03) & 12.9 (0.1)\\

 & DCP & 0.90 (0.01) & 0.88 (0.03) & 14.6 (0.3) & 0.90 (0.01) & 0.90 (0.02) & 11.7 (0.2)\\

 & DCP-CQR & 0.90 (0.01) & 0.87 (0.03) & 14.8 (0.4) & 0.90 (0.01) & 0.89 (0.03) & 11.9 (0.3)\\

\multirow[t]{-5}{*}{\centering\arraybackslash bio} & DistSplit & 0.90 (0.01) & 0.88 (0.03) & 14.7 (0.3) & 0.90 (0.01) & 0.90 (0.03) & 11.9 (0.3)\\
\cmidrule{1-8}
 & CHR & 0.90 (0.01) & 0.88 (0.03) & 10.9 (1.2) & 0.90 (0.01) & 0.88 (0.03) & 10.3 (1.2)\\

 & CQR & 0.90 (0.01) & 0.87 (0.04) & 15.0 (1.5) & 0.90 (0.01) & 0.90 (0.02) & 21.1 (1.6)\\

 & DCP & 0.90 (0.01) & 0.89 (0.03) & 1422.3 (0.1) & 0.90 (0.01) & 0.90 (0.03) & 1421.3 (0.1)\\

 & DCP-CQR & 0.90 (0.01) & 0.86 (0.04) & 14.0 (1.4) & 0.90 (0.01) & 0.90 (0.03) & 21.4 (1.8)\\

\multirow[t]{-5}{*}{\centering\arraybackslash blog} & DistSplit & 0.90 (0.01) & 0.87 (0.04) & 15.8 (1.6) & 0.90 (0.01) & 0.89 (0.03) & 16.7 (1.8)\\
\cmidrule{1-8}
 & CHR & 0.90 (0.01) & 0.87 (0.04) & 10.6 (0.9) & 0.90 (0.01) & 0.87 (0.04) & 11.2 (0.9)\\

 & CQR & 0.90 (0.01) & 0.89 (0.03) & 14.6 (1.0) & 0.90 (0.01) & 0.90 (0.02) & 19.2 (1.5)\\

 & DCP & 0.90 (0.01) & 0.90 (0.03) & 1303.3 (0.1) & 0.90 (0.01) & 0.90 (0.03) & 1302.6 (0.1)\\

 & DCP-CQR & 0.90 (0.01) & 0.89 (0.03) & 13.2 (1.1) & 0.90 (0.01) & 0.90 (0.03) & 19.4 (1.7)\\

\multirow[t]{-5}{*}{\centering\arraybackslash fb1} & DistSplit & 0.90 (0.01) & 0.89 (0.03) & 14.3 (1.1) & 0.90 (0.01) & 0.90 (0.03) & 16.5 (1.3)\\
\cmidrule{1-8}
 & CHR & 0.90 (0.01) & 0.87 (0.03) & 11.0 (0.9) & 0.90 (0.01) & 0.86 (0.03) & 10.8 (0.9)\\

 & CQR & 0.90 (0.01) & 0.89 (0.03) & 14.2 (0.9) & 0.90 (0.01) & 0.90 (0.03) & 17.7 (1.4)\\

 & DCP & 0.90 (0.01) & 0.90 (0.03) & 1964.0 (0.1) & 0.90 (0.01) & 0.89 (0.03) & 1963.4 (0.1)\\

 & DCP-CQR & 0.90 (0.01) & 0.89 (0.03) & 12.8 (1.1) & 0.90 (0.01) & 0.89 (0.03) & 17.8 (1.6)\\

\multirow[t]{-5}{*}{\centering\arraybackslash fb2} & DistSplit & 0.90 (0.01) & 0.89 (0.03) & 14.2 (1.1) & 0.90 (0.01) & 0.89 (0.03) & 15.1 (1.3)\\
\cmidrule{1-8}
 & CHR & 0.90 (0.01) & 0.90 (0.02) & 20.1 (1.3) & 0.90 (0.01) & 0.89 (0.03) & 18.4 (1.3)\\

 & CQR & 0.90 (0.01) & 0.89 (0.03) & 29.3 (1.2) & 0.90 (0.01) & 0.90 (0.02) & 32.6 (1.3)\\

 & DCP & 0.90 (0.01) & 0.89 (0.03) & 559.3 (0.0) & 0.90 (0.01) & 0.89 (0.03) & 559.0 (0.0)\\

 & DCP-CQR & 0.90 (0.01) & 0.89 (0.03) & 33.3 (2.3) & 0.90 (0.01) & 0.90 (0.03) & 32.2 (2.0)\\

\multirow[t]{-5}{*}{\centering\arraybackslash meps19} & DistSplit & 0.90 (0.01) & 0.90 (0.03) & 30.0 (2.3) & 0.90 (0.01) & 0.90 (0.03) & 29.8 (2.2)\\
\cmidrule{1-8}
 & CHR & 0.90 (0.01) & 0.90 (0.02) & 19.1 (1.2) & 0.90 (0.01) & 0.90 (0.02) & 17.7 (1.1)\\

 & CQR & 0.90 (0.01) & 0.88 (0.02) & 28.1 (1.0) & 0.90 (0.01) & 0.90 (0.03) & 30.5 (1.3)\\

 & DCP & 0.90 (0.01) & 0.89 (0.03) & 520.3 (0.0) & 0.90 (0.01) & 0.89 (0.03) & 520.1 (0.0)\\

 & DCP-CQR & 0.90 (0.01) & 0.89 (0.02) & 32.1 (2.2) & 0.90 (0.01) & 0.90 (0.02) & 29.9 (2.0)\\

\multirow[t]{-5}{*}{\centering\arraybackslash meps20} & DistSplit & 0.90 (0.01) & 0.89 (0.03) & 28.8 (2.0) & 0.90 (0.01) & 0.90 (0.03) & 27.9 (2.0)\\
\cmidrule{1-8}
 & CHR & 0.90 (0.01) & 0.90 (0.03) & 20.5 (1.2) & 0.90 (0.01) & 0.90 (0.03) & 19.2 (1.1)\\

 & CQR & 0.90 (0.01) & 0.89 (0.03) & 30.1 (1.3) & 0.90 (0.01) & 0.90 (0.02) & 33.4 (1.4)\\

 & DCP & 0.90 (0.01) & 0.89 (0.03) & 531.3 (0.0) & 0.90 (0.01) & 0.89 (0.03) & 531.0 (0.0)\\

 & DCP-CQR & 0.90 (0.01) & 0.89 (0.03) & 34.5 (2.4) & 0.90 (0.01) & 0.90 (0.02) & 32.9 (2.1)\\

\multirow[t]{-5}{*}{\centering\arraybackslash meps21} & DistSplit & 0.90 (0.01) & 0.90 (0.03) & 30.5 (2.0) & 0.90 (0.01) & 0.90 (0.03) & 30.6 (2.1)\\
\bottomrule
\end{tabular}

}
\end{table}

\begin{figure}[!htb]
  \centering
    \includegraphics[width=\textwidth]{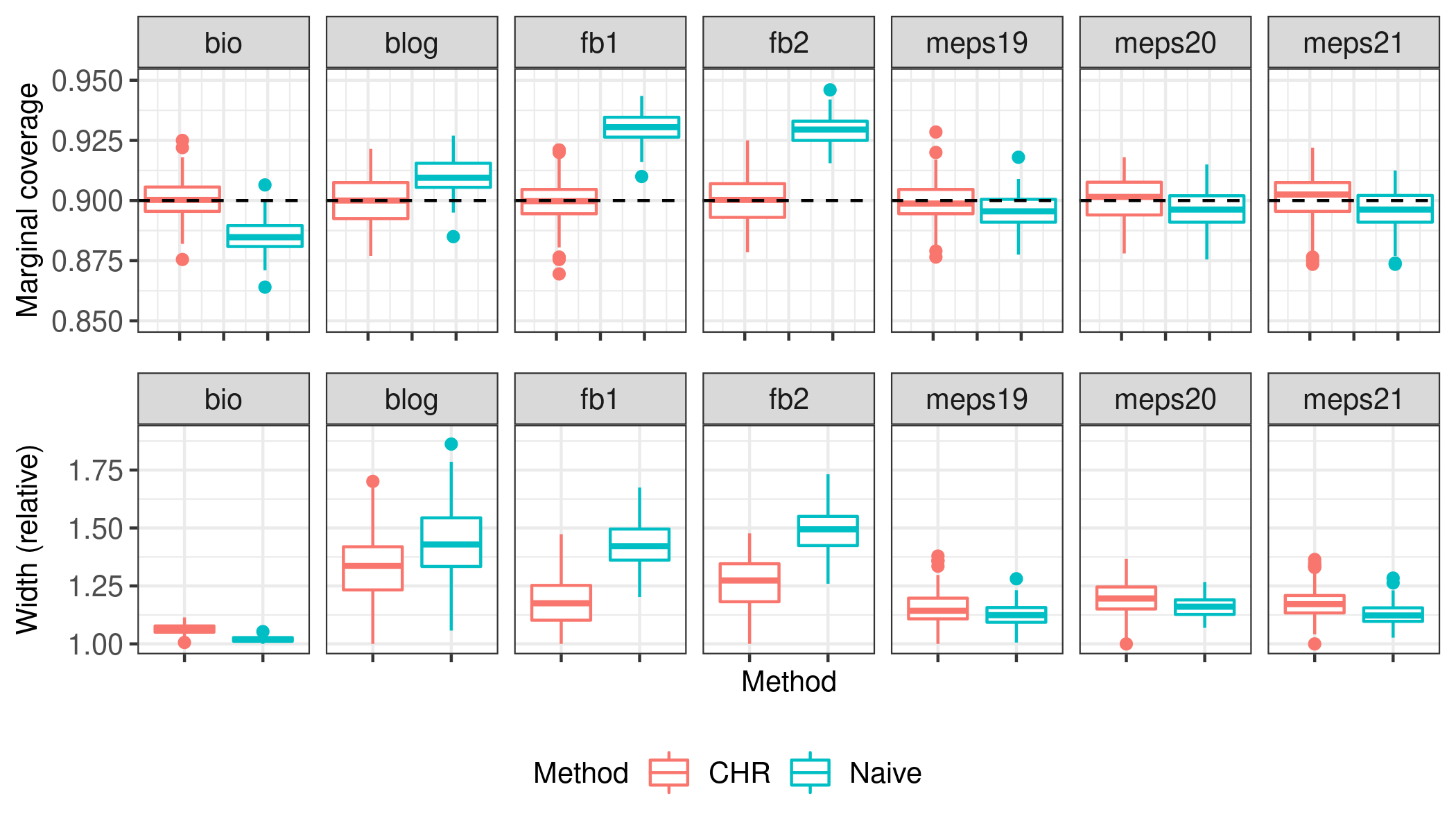}
\caption{Performance of our method (CHR) compared to that of naive uncalibrated prediction intervals based on the same deep neural network regression model, in the experiments with real data of Figure~\ref{fig:results_real_nnet}. Note that the top part of this plot shows marginal coverage.}
  \label{fig:results_real_nnet_naive}
\end{figure}

\end{document}